\newtheorem{prop}{Proposition}
\newtheorem{Problem}{Problem}
\newcommand{\tabincell}[2]{\begin{tabular}{@{}#1@{}}#2\end{tabular}}
\newcounter{parentnumber}
\theoremstyle{definition}
\newtheorem{defn}{Definition}
\newtheorem{definition}{Definition}
\newtheorem{lemma}{Lemma}
\newtheorem{remark}{\bf Remark}
\def\phi{\varphi}
\def\({\left(}
\def\){\right)}
\def\b0{{\mathbf{0}}}
\begin{document}
	
\title{\Huge Cooperative Multi-Point Vehicular Positioning Using Millimeter-Wave  Surface Reflection}
	
\author{Zezhong Zhang, Seung-Woo Ko, Rui Wang, and Kaibin Huang
\thanks{Zezhong Zhang and Kaibin Huang are with The  University of  Hong Kong, Hong Kong (Email: \{zzzhang, huangkb\}@eee.hku.hk). Seung-Woo Ko is with Korea Maritime and Ocean University, Korea (email: swko@kmou.ac.kr). Rui Wang is with  Southern University of Science and Technology, China (Email: wang.r@sustech.edu.cn). Part of this work was presented in IEEE GLOBECOM 2019~\cite{Globecom}.}}
	
\maketitle
	
\begin{abstract}	
Multi-point vehicular positioning is one essential operation for autonomous vehicles. However, the state-of-the-art positioning technologies, relying on reflected signals from a target (i.e., RADAR and LIDAR), cannot work without \emph{line-of-sight} (LoS). Besides, it takes significant time for environment scanning and object recognition with potential detection inaccuracy,
especially in complex urban situations. Some recent fatal accidents involving autonomous vehicles further expose such limitations.  In this paper, we aim at overcoming these limitations by proposing a novel {relative positioning} approach, called \emph{Cooperative Multi-point Positioning} (COMPOP). The COMPOP establishes cooperation between a \emph{target vehicle} (TV) and a \emph{sensing vehicle} (SV) if a LoS path exists, where a TV explicitly lets an SV to know the TV's existence by transmitting positioning waveforms. This cooperation makes it possible to remove the time-consuming scanning and target recognizing processes, facilitating real-time positioning. One prerequisite for the cooperation is a clock synchronization between a pair of TV and SV. To this end, we use a \emph{phase-differential-of-arrival} (PDoA) based approach to remove the TV-SV clock difference from the received signal. With clock difference correction,  the TV's position can be obtained via peak detection over a 3D power spectrum constructed by a \emph{Fourier transform} (FT) based algorithm. {The COMPOP also incorporates nearby vehicles, without knowing their locations, into the above cooperation for the case without a LoS path.} Specifically, several strong \emph{non-LoS} (NLoS) links from the TV to the SV can be generated via mirror-like reflections over the neighboring vehicles' metal surfaces. Following procedures in the LoS case, virtual TVs mirrored by nearby vehicles can be detected. By exploiting the geometric relation between the virtual and actual TVs, COMPOP can be achieved by intelligently combining the virtual TVs to position the actual TV. The effectiveness of the COMPOP is verified by several simulations concerning practical channel parameters.
\end{abstract}

\section{Introduction}

Vehicular positioning is one of the most important operations for autonomous driving and a challenging one as it requires high accuracy and low latency~\cite{Husheng,Magzine}. Presently, for long-range positioning (e.g., hundreds of meters), a \emph{target vehicle} (TV) is abstracted as a single point, and its GPS location is communicated to the \emph{sensing vehicle} (SV) over either a vehicle-to-vehicle link or across a wireless network~\cite{PPP, MobileCentric}. For medium to short-range positioning (tens to several meters), the single-point abstraction of the TV no longer suffices, and its geometric information (e.g., size and orientation) is also required for safe and accurate driving. In these ranges, the popular positioning technologies include \emph{RAdio-Detection-And-Ranging} (RADAR) and \emph{LIght-Detection-And-Ranging} (LIDAR). They face the challenges of long computation latency caused by complex signal processing and computer vision, potential inaccuracy of identifying TVs from their background environments, and ineffectiveness in the presence of blockages between a TV-SV pair. Their drawbacks have contributed to many accidents involving self-driving cars. To overcome these limitations of current technologies, we present in this paper a new technology {for relative positioning at the SV}, called \emph{Cooperative Multi-point Positioning} (COMPOP). Essentially, by detecting the cooperative signals broadcast by multi-antennas distributed at a TV, the SV estimates the antenna positions representing the TV skeleton, thereby performing multi-point TV positioning. Algorithms based on \emph{Fourier transform} (FT) are proposed for fast and accurate COMPOP not only when there are \emph{lines-of-sight} (LoS) but also when they are blocked. The latter exploits the signals reflected by the surfaces of nearby vehicles.	

\subsection{Single-Point Positioning}
Single-point positioning techniques were originally developed for locating mobile devices and recently also applied to autonomous driving. The most popular and simplest is to use a built-in \emph{Global Positioning System} (GPS) receiver for computing the receiver's position and sharing the information to peer devices over wireless links. However, in an urban environment, the required LoS links between GPS receivers and satellites are often blocked by e.g., buildings or tunnels. This issue has
motivated researchers to develop alternative techniques relying on base stations or access points in wireless networks as anchors to estimate the position of a mobile by either triangularization or measuring signal power~\cite{SynGap}. This requires the mobile to estimate anchors' positions from their signals. The negative effect of mobility on estimation accuracy can be coped with by utilizing sampled temporal measurements and motion models~\cite{MobileCentric}. Nevertheless, due to the unreliability of fading channels, such positioning techniques assisted by a wireless network cannot reach the level of precision required for autonomous driving. The required high-resolution positioning can be realized by \emph{ultra-wideband} (UWB) radios leveraging the fact that a large bandwidth overcomes multi-path fading and thereby enables accurate \emph{time-of-arrival} (ToA) measurements~\cite{UWBPR}.
To avoid the need of transmitter-receiver clock synchronization, \emph{time-difference-of-arrival} (TDoA)
and \emph{phase-difference-of-arrival} (PDoA) based methods are proposed which cancel at a receiver the clock difference of a transmitter (anchor) by observing received signals from different antennas or frequencies~\cite{TDoA,hyperbolic}. Most recently, TDoA-based positioning over \emph{non-LoS} (NLoS) links is made possible by separating multi-paths and locating the source via exploiting the paths' geometric relation~\cite{Kaifeng}. The deployment of such techniques enables the positioning of a hidden TV (one without an LoS link). Despite having a rich literature, the outputs of single-point positioning are insufficient for complex maneuvers in autonomous driving such as platooning and overtaking, for which multi-point TV positioning is required.

\subsection{Vehicular Sensing} \label{intro:sensing}
Vehicular sensing can be treated as an extreme form of multi-point positioning. The sensing
process involves scanning the surrounding environment and then recognizing, imaging, and positioning objects useful for autonomous driving (such as nearby pedestrians and vehicles). {Usually vehicular sensing does not rely on the GPS link, which detects relative locations of objects independently with an on-board sensing system.} Relevant technologies can be grouped as passive or active, depending on whether they require radiation. 
Typical passive sensors include infrared sensors, cameras, and passive \emph{millimeter waves} (mmWave) sensors. They exploit ambient and unintended infrared radiation, light, and mmWave to image the sources and reflectors~\cite{PassiveMMW}. The two most popular types of active vehicular sensors are RADAR and LIDAR~\cite{Husheng,BackPropagation}. A RADAR scans the environment by steering a microwave beam using an antenna array and
observes reflected signals with varying attenuation to image the environment ~\cite{SFCW2016,Advanced,MIMO1}. Subsequently, targeted objects are detected and positioned using signal processing and computer vision. LIDAR operates based on a similar principle except for replacing the microwave
beam with a mechanically steered sharp laser beams, thereby achieving a higher resolution~\cite{LIDARscanning}. Among the sensing technologies, those based on light (i.e., cameras) or infrared (e.g., infrared sensors) are exposed to severe performance degradation caused by hostile weather such as heavy rains and thick fogs~\cite{LIDARweather}. Overall, existing multi-point positioning technologies share two drawbacks that present key challenges for vehicular sensing and positioning, as described below. 
\begin{itemize}
	\item {\bf Latency and accuracy:} Environmental scanning (for RADAR and LIDAR) and object recognition (for RADAR, LIDAR, and cameras) are time-consuming. For example, a 3D beam scanning by a RADAR using a large-scale phased array can incur around ten-second latency~\cite{MIMO1}. This is unacceptable for autonomous driving in a crowded urban environment or at high speeds. On the other hand, objective recognition using a well-trained deep neural network with an onboard GPU typically takes several seconds. Furthermore, existing objective recognition techniques relying on offline training are easily affected by a variation on object features. As a result, the detection accuracy is usually in the range of $70\% \sim 90\%$~\cite{Training1}. Their application to auto-driving presents safety threats, as exemplified by recent fatal accidents.
	
	\item {\bf Hidden vehicle detection:} Besides TVs in sight, detecting hidden vehicles in the sensing blind
	spots (e.g., a TV around a street corner) can avoid many potential accidents. Though some progress has been made on NLoS single-point TV positioning, the desired detection of hidden TVs as multi-point objects is still an uncharted area and the theme of this work.
\end{itemize}  	


\subsection{Main Contributions}
Multi-point TV positioning refers to positioning a TV as a multi-point object where each point corresponds to a transmit antenna. As an attempt to tackle the two challenges discussed in the preceding section, we propose in this work the framework of \emph{cooperative multi-point positioning} (COMPOP) at a SV building on the cooperation that a TV broadcasts a signal with a waveform that facilitates relative TV positioning at the SV. In this framework, latency reduction is achieved in two ways.
\begin{itemize}
	\item The first is to retrieve from the total received signal the desired TV signal for further processing using the embedded signature. This avoids the time-consuming conventional method of 	environmental scanning and learning-based TV identification.
	\item The second is to apply the low-latency FT to efficiently compute a power spectrum distributed over the 3D space. This allows direct positioning by peak detection.
\end{itemize}
To tackle the second challenge of multi-point positioning without an LoS link, we propose a novel technique of using reflected TV signals over the smooth surfaces of nearby vehicles to position multiple ``virtual TVs''. {Then we exploit their geometry for combining to position the actual TV without any priori knowledge of the nearby vehicles. By overcoming these two key challenges to the existing positioning approaches, we believe that the proposed technique can be well integrated with the current positioning system for improving the safety of autonomous driving.}

The specific designs of the proposed COMPOP framework are summarized as follows.
\begin{itemize}
	\item {\bf COMPOP over a LoS link:} The framework for this case comprises two key algorithms operating
	in separate bandwidths. {First, due to transceiver separation, clock synchronization between the transceivers are necessary for coherent signal demodulation at the SV. By using two reserved single tones and given the knowledge of TV
	signature waveform, we propose to use a PDoA-based iterative algorithm for TV-SV synchronization
	by accurately estimating their clock difference from the received TV signals in the presence of
	channel noise.} Second, the remaining bandwidth is used by the TV to transmit a conventional
	\emph{stepped-frequency-continuous-wave} (SFCW), a multi-tone waveform with a uniform frequency gap between tones \cite{SFCW2016}. After correcting the clock difference, a determined relation between the TV position and the received SFCW signal is established. As a main feature of COMPOP, we propose the application of FT to transform the received SFCW
	signal into the mentioned spectrum over the 3D space for estimating the TV position by peak
	detection. Besides, the positioning accuracy is analyzed.
	\item {\bf COMPOP over mirror-reflection links:} For the case without LoS, but with nearby vehicles as reflectors, a COMPOP technique is developed as follows. The combined use of antenna array and TV signature allows the SV to resolve the
	received signals as reflected by different smooth vehicular surfaces. The application of the
	preceding technique for LoS COMPOP on the resolved signals yields multi-point positions of
	multiple ``virtual vehicles''. {Without priori knowledge of the nearby vehicles' locations, an intelligent combining approach is proposed to position the actual TV by exploiting the geometric relation between the virtual and actual TVs.}
\end{itemize}
 
The remainder of this paper is organized as follows. In Section II, the system model and signal model are introduced. In Section III and IV, the proposed COMPOP is elaborated and analyzed in both LoS and NLoS conditions, and a systematic comparison between the proposed and existing techniques is presented. The realistic simulation results are presented in Section V, and the conclusion follows in Section VI.

\section{System Model}
	
\begin{figure}[t]\vspace{-0.5cm}
	\begin{minipage}[t]{0.5\linewidth}
		\centering
		\includegraphics[scale = 0.55]{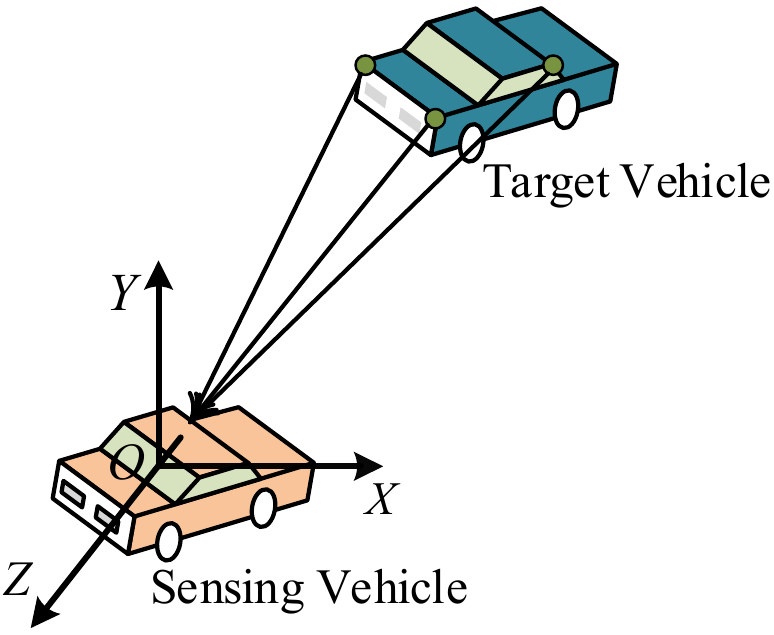}
		\caption*{(a) COMPOP over a LoS link.}
		\label{Direct}
	\end{minipage}
	\begin{minipage}[t]{0.5\linewidth}
		\centering
		\includegraphics[scale = 0.42]{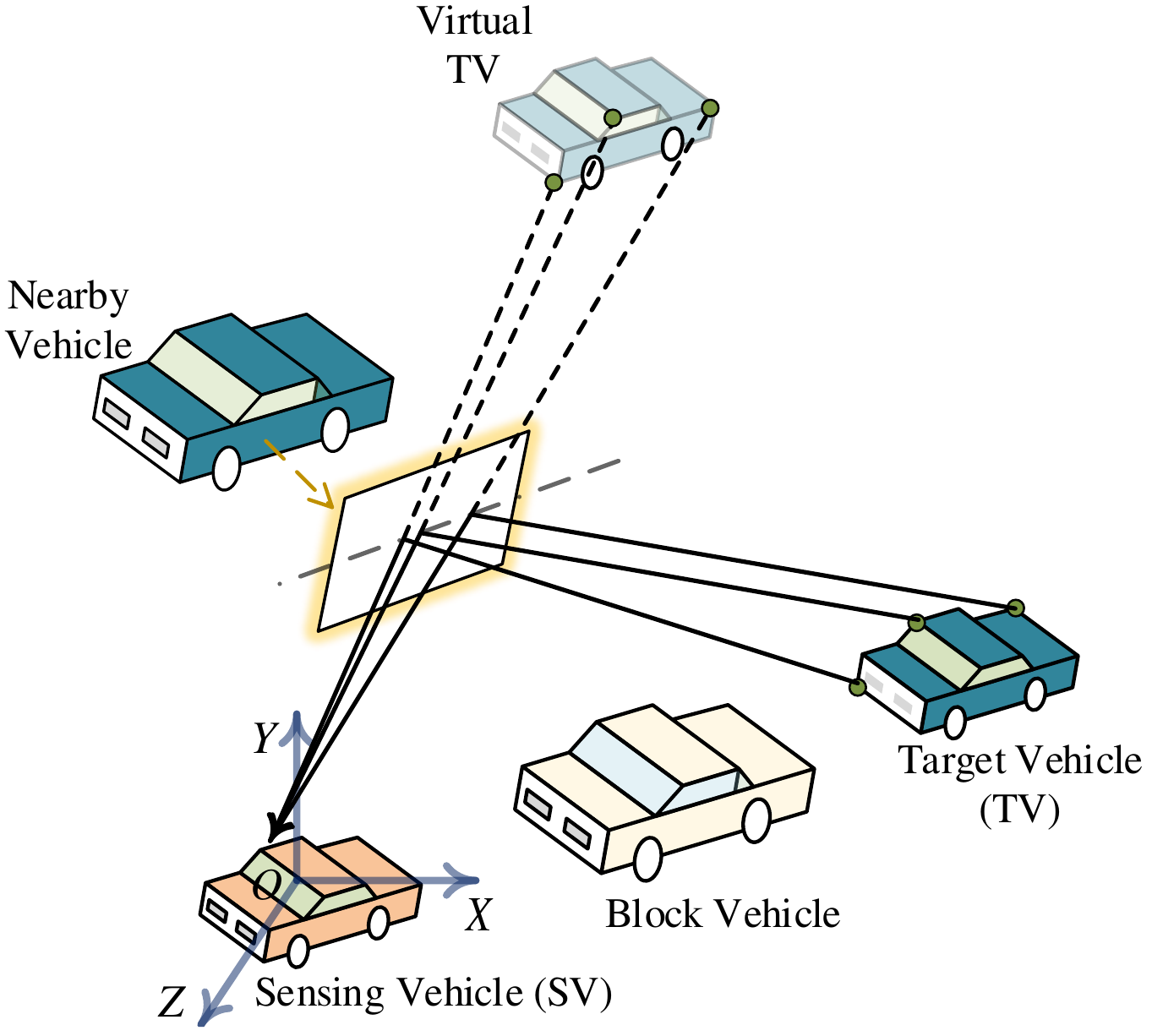}
		\caption*{(b) COMPOP over a mirror-reflection link.}
		\label{Indirect}
	\end{minipage}
	\caption{Two scenarios of COMPOP.}\label{mirror}\vspace{-0.5cm}
\end{figure}
	
Consider the scenario with multiple vehicles on the road, including one pair of TV and SV. The TV  is equipped with an antenna array with $N_t$ elements distributed over the vehicle body such that the distribution sketches its shape. Consequently, the SV equipped with a distributed  array of $N_r$ antennas   performs multi-point positioning of the TV by locating its antennas from their broadcast waveforms in a mmWave spectrum. As illustrated in Fig.~\ref{mirror}, we consider both the scenarios where the TV and SV are connected with an LoS link or NLoS links. For the scenario of NLoS links  [see Fig. \ref{mirror}(b)], COMPOP at the SV relies on signals reflected on the smooth surfaces of nearby vehicles. Though mmWave signals are attenuated severely by scattering and reflection on non-smooth surfaces, the attenuation is found to be small if the surfaces are smooth, e.g., those of vehicles' polished metallic body~\cite{28GHz}. 

\subsection{Signal  Models}\label{SignalModel}
Each TV antenna transmits the superposition of two waveforms for facilitating different operations of COMPOP. One is  a  signature waveform (a multi-tone waveform) transmission (similar to that in \cite{SignatureWaveform}) enabling the SV to estimate the SV-TV system clock difference so as to eliminate its negative effect on the positioning. The second is a SFCW waveform (also a multi-tone waveform) in a separate bandwidth from the signature, which facilitates COMPOP at the SV. The waveforms are described as follows.

\subsubsection{Signature Waveform} For clock-difference estimation, it is sufficient to transmit different signature waveforms over  two antennas, whose  indices are denoted as $\mathsf{a}$ and $\mathsf{b}$, using two single tones for each antenna. The antenna coordinates are represented as  $\mathbf{x}_{\mathsf{a}} = (x_{\mathsf{a}}, y_{\mathsf{a}}, z_{\mathsf{a}})$ and $\mathbf{x}_{\mathsf{b}} = (x_{\mathsf{b}}, y_{\mathsf{b}}, z_{\mathsf{b}})$, respectively. The two  waveforms for the two antennas, denoted as  $s_{\mathsf{a}}$ and $s_{\mathsf{b}}$, are given as 
\begin{align}\label{Signature_Waveform}
s_{\mathsf{a}}(t+\sigma)=e^{j2\pi f_{\mathsf{a}} (t+\sigma)}+ e^{j2\pi (f_{\mathsf{a}}+\Delta) (t+\sigma)}, \qquad
s_{\mathsf{b}}(t+\sigma)= e^{j2\pi f_{\mathsf{b}} (t+\sigma)}+ e^{j2\pi (f_{\mathsf{b}}+\Delta) (t+\sigma)},
\end{align}
where $f_{\mathsf{a}}$ and $f_{\mathsf{b}}$ are two orthogonal frequencies specifying the signatures, $\Delta$ is a given  frequency separation, and  $\sigma$ is the TV-SV system clock difference (in sec). 

First, consider the scenario with $L$ surface-reflection links $(L \ge 1)$. 
Let $\Gamma^{(\ell)}$ denote the complex reflection coefficient of the $\ell$-th link given as $\Gamma^{(\ell)}=|\Gamma^{(\ell)}|\exp(j \angle \Gamma^{(\ell)})$. Moreover,  let $\tau_{n,m}^{(\ell)}$ denote  the signal flight  time from TV's  antenna $n$ to SV's antenna $m$ proportional to the propagation distance $d_{n,m}^{(\ell)}$, i.e., $d_{n,m}^{(\ell)}=c\cdot \tau_{n,m}^{(\ell)}$ where $c$ is the speed of light. { Given the notations above, the raw received waveform at each SV antenna, say antenna $m$, is given as
\begin{align}
W_m = \sum\limits_{\ell = 1}^L w_m^{(\ell)}(t),
\end{align}
where $w_m^{(\ell)}(t) = \Gamma^{(\ell)} s_{\mathsf{a}}( t + \sigma-{\tau _{\mathsf{a},m}^{(\ell)}}) + \Gamma^{(\ell)} s_{\mathsf{b}}( t + \sigma-{\tau _{\mathsf{b},m}^{(\ell)}})$.
We assume that the AoAs of different signal arrivals are separable using a classic technique, e.g., MUSIC~\cite{MUSIC}, where the AoAs can be accurately detected in the angular domain by searching the power spectrum of the received signals~\footnote{The effect of antennas on AoA detection accuracy is well studied by simulations in~\cite{AoA} such that AoA error is significantly reduced as the number of antenna increases.}.  
Specifically, after AoA detection, the received signals can be differentiated by coherent detection in the angular domain, and regrouped as a vector ${\mathbf{w}}_m = [{w}_m^{(1)}(t), \cdots, {w}_m^{(L)}(t)]^T$. Note that each element in ${\mathbf{w}}_m$ comprises waveforms at four different frequencies.}
By exploiting frequency orthogonality, different frequency components in the received waveform antenna $m$, can be separated and grouped to form two $K$ by $L$ matrices to facilitate the algorithmic design in the sequel:
\begin{align}\label{signature waveform:MatrixForm}
{\boldsymbol A}_{m}(t)=[{\boldsymbol a}_{m}^{(1)}(t), \cdots,{\boldsymbol a}_{m}^{(L)}(t)],
\quad 
{\boldsymbol B}_{m}(t)=[{\boldsymbol b}_{m}^{(1)}(t), \cdots, {\boldsymbol b}_{m}^{(L)}(t)], 
\end{align}
where
\begin{align} 
{\boldsymbol{a}}_{m}^{(\ell)}(t)=\Gamma^{(\ell)}\left[\begin{matrix}e^ { j{2\pi {f_{\mathsf{a}}}\left( t+\sigma-{\tau _{\mathsf{a},m}^{(\ell)}} \right)}} \\ 
e^{ j{2\pi (f_{\mathsf{a}}+\Delta)\left( {t+\sigma-{\tau _{\mathsf{a},m}^{(\ell)}}} \right)} } \end{matrix}\right], \qquad
{\boldsymbol{b}}_{m}^{(\ell)}(t) =\Gamma^{(\ell)}\left[\begin{matrix}e^ { j{2\pi {f_{\mathsf{b}}}\left( t+\sigma-{\tau _{\mathsf{b},m}^{(\ell)}} \right)}} \\ 
e^{ j{2\pi (f_{\mathsf{b}}+\Delta)\left( {t+\sigma-{\tau _{\mathsf{b},m}^{(\ell)}}} \right)} } \end{matrix}\right]. 
\end{align}
Next, two matched filters are designed as ${\bf D}_{\mathsf{a}}= \mathsf{diag}\left\{e^{-j2\pi f_{\mathsf{a}}t}, e^{-j2\pi (f_{\mathsf{a}}+\Delta)t}\right\}$ and 
${\bf D}_{\mathsf{b}}= \mathsf{diag}\left\{e^{-j2\pi f_{\mathsf{b}} t},\right.\\ \left. e^{-j2\pi (f_{\mathsf{b}}+\Delta) t} \right\}$, for demodulating  ${\boldsymbol A}_{m}(t)$ and ${\boldsymbol B}_{m}(t)$ from time functions into matrix  symbols:
\begin{align}
 \label{signature waveform_A}  {\mathbf{A}}_{m} &= {\bf D}_{\mathsf{a}}{\mathbf{A}}_{m}(t)=\left[ {\boldsymbol{a}}_{m}^{(1)}, {\boldsymbol{a}}_{m}^{(2)}, \cdots,  {\boldsymbol{a}}_{m}^{(L)}\right]\\
 \label{signature waveform_B}  {\mathbf{B}}_{m}&= {\bf D}_{\mathsf{b}}{\mathbf{B}}_{m}(t)=\left[ {\boldsymbol{b}}_{m}^{(1)}, {\boldsymbol{b}}_{m}^{(2)}, \cdots,  {\boldsymbol{b}}_{m}^{(L)}\right], 
\end{align}
where
\begin{align} \label{Coefficient_signature waveforms}
{\boldsymbol{a}}_{m}^{(\ell)} =  \Gamma^{(\ell)}\left[\begin{matrix}e^{ j{2\pi {f_{\mathsf{a}}}\left( \sigma-{\tau _{\mathsf{a},m}^{(\ell)}} \right)}}\\ 
e^{ j{2\pi (f_{\mathsf{a}}+\Delta)\left( {\sigma-{\tau _{\mathsf{a},m}^{(\ell)}}} \right)} }\end{matrix}\right],\qquad 
{\boldsymbol{b}}_{m}^{(\ell)} =  \Gamma^{(\ell)}\left[\begin{matrix}e^{ j{2\pi {f_{\mathsf{b}}}\left( \sigma-{\tau _{\mathsf{b},m}^{(\ell)}} \right)}}\\ 
e^{ j{2\pi (f_{\mathsf{b}}+\Delta)\left( {\sigma-{\tau _{\mathsf{b},m}^{(\ell)}}} \right)} }\end{matrix}\right].
\end{align}
{ It is worthwhile to notice that all equations above also hold if one of the $L$ links is a LoS link.}

Next, consider the other scenario with a LoS link to the TV, { where unresolvable reflection links possibly exist but are neglected due to the significant power difference between LoS and NLoS paths.} 
Without loss of generality, let the resultant uniform  channel gains be normalized as $\Gamma=1$.  { Then the LoS counterparts of  ${\mathbf{A}}_{m}$ and  ${\mathbf{B}}_{m}$ can be simplified from \eqref{signature waveform_A} and \eqref{signature waveform_B} as: 

\begin{align} \label{Coefficient_signature waveforms_LoS}
{\mathbf{A}}_{m}^{\text{LOS}} = \left[\begin{matrix}e^{ j{2\pi {f_{\mathsf{a}}}\left( \sigma-{\tau _{\mathsf{a},m}} \right)}}\\ 
e^{ j{2\pi (f_{\mathsf{a}}+\Delta)\left( {\sigma-{\tau _{\mathsf{a},m}}} \right)} }\end{matrix}\right],\qquad 
{\mathbf{B}}_{m}^{\text{LOS}} = \left[\begin{matrix}e^{ j{2\pi {f_{\mathsf{b}}}\left( \sigma-{\tau _{\mathsf{b},m}} \right)}}\\ 
e^{ j{2\pi (f_{\mathsf{b}}+\Delta)\left( {\sigma-{\tau _{\mathsf{b},m}}} \right)} }\end{matrix}\right]. 
\end{align} 
where $\tau_{n,m}$ denotes the signal flight time from TV's antenna $n$ to SV's antenna $m$ proportional to the propagation distance $d_{n,m}$ in LoS.}
	
\subsubsection{SFCW Waveform}

A multi-tone waveform commonly used in RADAR, called  \emph{stepped-frequency-continuous-wave} (SFCW)~\cite{SFCW2016, BackPropagation,NearFieldPhasedArray}, is broadcast by each TV antenna. { The waveform, denoted as ${{s}}(t)$, comprises multiple single-tone continuous-waves  with equally separated frequencies by a fixed frequency gap $\Delta$. Mathematically, 
\begin{align}\label{transmitScalar}
{{s}}(t)= \sum\limits_{k=1}^{K}\exp(j{2\pi {f_k}{t}}),
\end{align}
where $\mathcal{F} = \{f_k\}_{k=1}^K$ represents the set of frequencies  such that $f_k=f_1+(k-1)\Delta$ for $k=1, \cdots, K$. The vector form of the SFCW is also provided as
\begin{align}\label{transmitSig}
{\boldsymbol{s}}(t)= \left[\exp(j{2\pi {f_1}{t}}), ..., \exp(j{2\pi {f_K}{t}}) \right]^T,
\end{align}
since the single-tone continuous-waves are naturally separated in frequency.}

First, consider the scenario with $L$ surface-reflection links $(L \ge 1)$. { With frequency decoupling,} the received signal at the SV's antenna $m$ is 
\begin{align}\label{Eq:TotReceivedSig}
{\boldsymbol r}_{m}(t)=\sum_{\ell=1}^L  {\boldsymbol r}_{m}^{(\ell)}(t),
\end{align}
where ${\boldsymbol r}_{m}^{(\ell)}(t) \in \mathbb{C}^{K\times1}$ represents the signal vector from the $\ell$-th surface-reflection link as 
\begin{align}\label{receiveSig}
{\boldsymbol r}_{m}^{(\ell)}(t)&=\Gamma^{(\ell)}\sum_{n=1}^{N_t} {\boldsymbol r}_{n,m}^{(\ell)}(t)=\Gamma^{(\ell)}\sum_{n=1}^{N_t} {\boldsymbol s}(t+\sigma-\tau_{n,m}^{(\ell)}). 
\end{align}
{ By assuming perfect AoA detection using the MUSIC technique~\cite{MUSIC}, we can decompose \eqref{Eq:TotReceivedSig} into individual ${\boldsymbol r}_{m}^{(\ell)}(t)$ by coherent detection in the angular domain, which is rewritten as a $K$ by $L$ matrix ${\mathbf{R}}_{m}(t)$ as follows:
\begin{align}\label{Eq:MatrixForm}
{\mathbf{R}}_{m}(t)=\left[ {\boldsymbol{r}}_{m}^{(1)}(t), \cdots,  {\boldsymbol{r}}_{m}^{(L)}(t)\right].
\end{align} }
Assuming that the TV-SV system clock difference $\sigma$ is estimated  as $\tilde{\sigma}$, the received signal \eqref{Eq:MatrixForm} is demodulated by multiplying the $K$ by $K$ matched filtering matrix ${\bf D}= \mathsf{diag}\{{\boldsymbol{s}}(t+\tilde{\sigma})^{H}\}$ as
\begin{align}\label{Demod}
{\mathbf{Y}}_m=\left[ {\mathbf{y}}_{m}^{(1)}, \cdots,  {\mathbf{y}}_{m}^{(L)}\right]
={\bf D}{\mathbf{R}}_{m}(t),
\end{align}
where
${\mathbf{y}}_{m}^{(\ell)} = {\bf D} {\mathbf{r}}_{m}^{(\ell)}(t)= \left[y_m^{\ell,1}, y_m^{\ell,2},..., y_m^{\ell,K}\right]^T$ with the component $y_m^{\ell,k}$ being
\begin{align}
y_m^{\ell,k} &= \Gamma^{(\ell)}\sum_{n=1}^{N_t}{{\exp\left[ { j2\pi {f_k}(\sigma-\tilde{\sigma}-{\tau_{n,m}^{(\ell)}} )  } \right]}}. 
\end{align}
With accurate estimation of the clock difference using the algorithm in Section \ref{sec:Synchronization}, the $(\ell,k)$-th received signal component is simplified as
\begin{align}\label{demodulated}
y_m^{\ell,k} &= \Gamma^{(\ell)}\sum_{n=1}^{N_t}{{\exp\left[ { -j2\pi {f_k}{\tau_{n,m}^{(\ell)}}  } \right]}}, \ \forall \ell, k.
\end{align}

Next, consider the other scenario with a LoS link. The received signal component in \eqref{demodulated} can be further simplified with $\Gamma = 1$ as 
\begin{align}\label{demodulated_LoS}
y_m^{k} &= \sum_{n=1}^{N_t}{{\exp\left[ { -j2\pi {f_k}{\tau_{n,m}}  } \right]}}, \ \forall k. 
\end{align}

Moreover, without loss of generality, we define the origin $O$ of the coordinate system at the center of the SV, $Z$-axis in the direction of AoA. The $X$, $Y$-axes are parallel and vertical to the ground, respectively, perpendicular to the $Z$-axis, as shown in Fig. \ref{mirror}.

\begin{remark}[Feasible Ranging Distance of SFCW]\label{feasibleDis}Due to the periodicity of phases, the maximum ranging distance of  SFCW should be limited by $R_{\max}=\frac{c}{\Delta}$ (in meters) to avoid ambiguity \cite{SFCW2016}.
\end{remark}

\begin{remark}[Channel Fading]\label{SaomplingProcess}  The sampling process takes $T_s = \frac{1}{\mathcal{B}}$ seconds, where $\mathcal{B}$ is the system bandwidth at the SV. Since the sampling duration ($3.3^{-10}$ s with $\mathcal{B}=3$ GHz) is much shorter than the coherence time (approximately $T_c = 1$ ms at $30$ GHz  with velocity $v = 10$ m/s~\cite{Goldsmith}, the channel fading, comprising the pathloss and small-scale fading, is considered as a constant and omitted in the presentation for convenience.
\end{remark}



	\subsection{Procedure Design}\label{Sec:PF}
	
	For their non-overlapping spectrums, the two waveforms can be separated by the SV using filtering. Moreover, the signal processing delay at the SV is fixed and known from calibration, allowing the suppression of  its effect on TV positioning.

	\subsubsection{Clock Synchronization} To establish direct relation between the TV's multi-point position and the superimposed signal $y_m^{\ell,k}$ \eqref{demodulated} or $y_m^{k}$ \eqref{demodulated_LoS}, the SV aims at compensating the system clock difference $\sigma$ by using  \eqref{signature waveform_A}, \eqref{signature waveform_B} or \eqref{Coefficient_signature waveforms_LoS}, divided into two cases as follows.
	\begin{itemize}
		{ \item {\bf A LoS link:} When a LoS link exists, the SV is expected to detect the system clock difference $\sigma$ from the received signature waveforms $\{{\mathbf{A}}_{m}^{\text{LOS}}, {\mathbf{B}}_{m}^{\text{LOS}} |\ \forall m\}$,
		where ${\mathbf{A}}_{m}^{\text{LOS}}, {\mathbf{B}}_{m}^{\text{LOS}}$ are matrix signals in \eqref{Coefficient_signature waveforms_LoS} received over the $m$-th antenna.
		
		\item {\bf Reflection links:} The clock synchronization needs to be achieved for each reflection link. Take the $\ell$-th reflection link as an example, the clock difference $\sigma$ needs to be detected based on signature waveforms $\{{\boldsymbol{a}}_{m}^{(\ell)}, {\boldsymbol{b}}_{m}^{(\ell)} |\ \forall m\}$, where ${\boldsymbol{a}}_{m}^{(\ell)}, {\boldsymbol{b}}_{m}^{(\ell)}$ are given in \eqref{Coefficient_signature waveforms}.
		}
	\end{itemize}
	
	\subsubsection{Multi-point Positioning}
	Define the set of transmit-antenna locations at a TV as ${\mathcal X} = \{\mathbf{x}_n\}$ with $\mathbf{x}_n \in \mathbb{R}^3$ and $\left|\mathcal{X}\right| = {N_t}$. Then, an indicator function $\mathsf{I}_{\{\mathbf{x} \in \mathcal{X}\}}$ to represent the transmit antenna's distribution over the 3D spatial domain is defined as
	\begin{align}
	\mathsf{I}_{\{\mathbf{x} \in \mathcal{X}\}} = \sum\limits_{n=1}^{N_t}\delta(\mathbf{x}-\mathbf{x}_n),
	\end{align}
    where $\delta(\cdot)$ is a delta function satisfying $\delta(\mathbf{x}) = 0, \forall \mathbf{x} \ne \mathbf{0}$ and $\int_{\mathbb{R}^3}\delta(\mathbf{x})d\mathbf{x}=1$. Let $D(\mathbf{x}_n, \mathbf{p}_{m}) = \|\mathbf{x}_n - \mathbf{p}_{m}\|$ measures the LoS distance between $\mathbf{x}_n$ and the location of receive antenna $m$ denoted by $\mathbf{p}_m$. For reflection links,  $D^{(\ell)}{(\mathbf{x}_n, \mathbf{p}_{m})}$ measures the propagation distance over the $\ell$-th link. Accordingly, we directly have $\tau_{n,m} = \frac{D(\mathbf{x}_n, \mathbf{p}_{m})}{c}$ and $\tau_{n,m}^{(\ell)} = \frac{D^{(\ell)}{(\mathbf{x}_n, \mathbf{p}_{m})}}{c}$. Also, the received SFCW waveforms \eqref{demodulated} and \eqref{demodulated_LoS} can be rewritten as 
	\begin{align}\label{reflectionLink}
	y_m^{\ell,k} &= \Gamma^{(\ell)}\sum_{n=1}^{N_t}{{\exp\left( { - j\frac{2\pi {f_k}}{c}D^{(\ell)}{(\mathbf{x}_n, \mathbf{p}_{m})}   } \right)}}, \ \forall \ell, k.
	\end{align}
	for reflection links and
	\begin{align}\label{LoSLink}
	y_m^{k} &= \sum_{n=1}^{N_t}{{\exp\left( {- j\frac{2\pi {f_k}}{c}D{(\mathbf{x}_n, \mathbf{p}_{m})}  } \right)}}, \ \forall k. 
	\end{align}
	for the LoS link if it exists.
	
	As aforementioned, the point set $\mathcal{X}$ represents the multi-point position information of the TV. { Estimating $\mathcal{X}$ can be divided into two cases as follows.
	\begin{itemize}
		\item {\bf A LoS link:} The multi-point TV position $\mathcal{X}$ is directly retrieved from SFCW signals received in LoS, i.e., $\{{y}_m^k |\ \forall m,k \}$.
	    \item {\bf Reflection Links:} The real multi-point TV position $\mathcal{X}$ needs to be detected by combining all SFCW signals from different reflection links, i.e., $\{\mathbf{y}_m^{(\ell)} |\ \forall m, \ell \}$.	
	\end{itemize} }

		
	\begin{figure}[t]\vspace{-0.5cm}
		\centering
		\includegraphics[
		width=350pt]{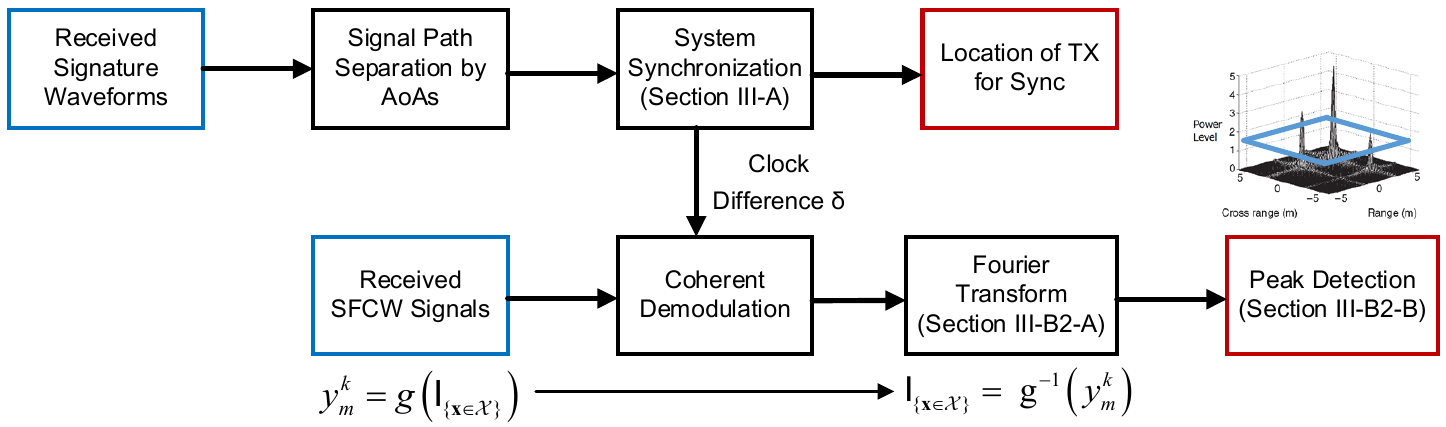}
		\caption{Diagram of COMPOP in LoS.}\label{LoS_Diag}\vspace{-0.7cm}
	\end{figure}

	\section{COMPOP over a LoS Link}\label{sec:LoS}
	In this section, we consider a case where a LoS link between the TV and the SV exists, making it reasonable to ignore other reflection links due to the significant power difference between LoS and NLoS paths.
	The scheme we propose for LoS case is illustrated in Fig. \ref{LoS_Diag}, consisting of two steps: 1) synchronization; and 2) multi-point positioning. The overview and algorithm description for each step are presented in the following. 
	
	{ For clarification, we firstly summarized the assumptions used in this section:
	\begin{itemize}
		\item {\bf Perfect AoA Detection:} Accurate AoA detection can be realized by applying classical MUSIC algorithm~\cite{MUSIC} with an appropriate number of antennas at the receiver.
		\item {\bf Constant Channel Fading:} The channel fading is considered constant during the sampling process.
	\end{itemize} }

	\subsection{Step 1: Synchronization}\label{sec:Synchronization}
\subsubsection{Overview} 
As illustrated in Sec. \ref{SignalModel}, synchronization is necessary to compensate the system clock difference $\sigma$, enabling correct coherent demodulation \eqref{Demod} at the receiver and the subsequent multi-point positioning. Note that the clock difference is contained in the signal phases in \eqref{Coefficient_signature waveforms_LoS}, which can be observed at the receiver. However, to estimate the clock difference $\sigma$ directly is challenging because it is coupled with the propagation delay $\tau_m$ as shown in \eqref{Coefficient_signature waveforms_LoS}. Therefore, it is necessary to separate the two parameters by estimating the propagation delay first, which can be translated into the estimation of one representative transmit antenna, say $\mathbf{x}_\mathsf{a}$, given the knowledge of the receive antennas' locations. To this end, we first estimate $\mathbf{x}_\mathsf{a}$ in the presence of noise by applying a \emph{phase-difference-of-arrival} (PDoA) based method~\cite{Newton,hyperbolic}.

\subsubsection{Algorithm Description}
Here we give an approach to estimate the location $\mathbf{x}_\mathsf{a}$ of the representative transmit antenna $\mathsf{a}$ from the received signature waveform $\mathbf{A}_m^{\text{LOS}}$ in \eqref{Coefficient_signature waveforms_LoS}. The index of the transmit antenna are omitted for brevity, i.e. $\tau_m = \tau_{{\mathsf{a}}, m}$.
\begin{itemize}
	\item \emph{Detection of TX Location:}   
	Let $\mathsf{F}_{m}(\mathbf{x}_{\mathsf{a}})$ denote the propagation distance difference from the antenna ${\mathsf{a}}$ to the SV's antennas $m$ and $1$, given as 
	\begin{align}\label{FuncDefine}
	\mathsf{F}_{m}(\mathbf{x}_{\mathsf{a}}) = D(\mathbf{x}_{\mathsf{a}}, \mathbf{p}_{m}) - D(\mathbf{x}_{\mathsf{a}}, \mathbf{p}_{1}), \quad m=2,\cdots, {N_r}.
	\end{align}
    At the SV's antenna $m$, the phase difference $\eta_{ m}=2\pi {\Delta}({\tau _{m}} -\sigma)$ between the two components of the received signature waveforms can be directly measured from ${\mathbf{A}}_{m}^{\text{LOS}}$ in \eqref{Coefficient_signature waveforms_LoS} with noise in the presence, denoted as $\tilde{\eta}_m$. Then based on the relation $\eta_{ m}=2\pi {\Delta}({\tau _{m}} -\sigma) = 2\pi\Delta\left(\frac{D(\mathbf{x}_\mathsf{a}, \mathbf{p}_m)}{c}-\sigma \right)$, a noisy measurement of  $\mathsf{F}_{m}(\mathbf{x}_{\mathsf{a}})$ is given as
	\begin{align}\label{Measurements}
	\tilde{\mathsf{F}}_{m} = c\frac{\left( \tilde{\eta} _{m} - \tilde{\eta} _{1} \right)}{2\pi \Delta } =
	 c\frac{\left( \eta _{m} - \eta _{1} \right)}{2\pi \Delta } + \Delta \phi_m, \quad m=2,\cdots, {N_r},
	\end{align}
    where $\Delta \phi_m$ is the additional Gaussian noise. Since $\mathsf{F}_{m}(\mathbf{x}_{\mathsf{a}})$ is univariate w.r.t. the location $\mathbf{x}_{\mathsf{a}}$, the TX location can be optimized based on the \emph{minimum-mean-square-error} (MMSE) criterion to minimize the gap between $\mathsf{F}_{m}(\mathbf{x}_{\mathsf{a}})$ and the measurement $\tilde{\mathsf{F}}_{m}$ as follows.
    \begin{Problem}[TX Location Optimzation]\label{LocOpt}
    	\begin{align}\label{Optimization}
    	\mathop {\min }\limits_{\mathbf{x}_{\mathsf{a}}} \quad \sum\limits_{m = 1}^{N_r} {\left\| {{{\tilde {\mathsf{F}}}_m} -{\mathsf{F}_m}({{\bf{x}}_{\mathsf{a}}}) } \right\|_2^2}.
    	\end{align}
    \end{Problem}
    {    Then the optimal solution ${\mathbf{x}_{\mathsf{a}}}^*$, which naturally coincides with the actual location $\mathbf{x}_{\mathsf{a}}$, can be achieved by using the iterative Gauss-Netwon method \cite{Gauss_Newton} with an arbitrary initial point $\tilde{\mathbf{x}}_{\mathsf{a}}$ as follows
    \begin{align}\label{Solution}
    	\begin{gathered}
    		\boldsymbol{h} = {\left( {{{\bf{G}}(\tilde{\mathbf{x}}_{\mathsf a})^T}{\bf{G}}(\tilde{\mathbf{x}}_{\mathsf{a}})} \right)^{ - 1}}{{\bf{G}}(\tilde{\mathbf{x}}_{\mathsf{a}})^T}{\bf{b}}(\tilde{\mathbf{x}}_{\mathsf{a}})\\
    		{\tilde{\mathbf{x}}_{\mathsf{a}}} \longleftarrow {\tilde{\mathbf{x}}_{\mathsf{a}}} + \boldsymbol{h},
    	\end{gathered}
    \end{align}
    where
    \begin{align}
    	{\bf{G}}(\tilde{\mathbf{x}}_{\mathsf{a}})=\left[ {\begin{array}{*{20}{c}}
    			{\frac{{\partial {\mathsf{F}_{2}}({\tilde{\mathbf{x}}_{\mathsf{a}}})}}{{\partial x_{\mathsf{a}}}}}&{\frac{{\partial {\mathsf{F}_{2}}({\tilde{\mathbf{x}}_{\mathsf{a}}})}}{{\partial y_{\mathsf{a}}}}}&{\frac{{\partial {\mathsf{F}_{2}}({\tilde{\mathbf{x}}_{\mathsf{a}}})}}{{\partial z_{\mathsf{a}}}}}\\
    			{...}&{...}&{...}\\
    			{\frac{{\partial {\mathsf{F}_{{N_r}}}({\tilde{\mathbf{x}}_{\mathsf{a}}})}}{{\partial x_{\mathsf{a}}}}}&{\frac{{\partial {\mathsf{F}_{{N_r}}}({\tilde{\mathbf{x}}_{\mathsf{a}}})}}{{\partial y_{\mathsf{a}}}}}&{\frac{{\partial {\mathsf{F}_{{N_r}}}({\tilde{\mathbf{x}}_{\mathsf{a}}})}}{{\partial z_{\mathsf{a}}}}}
    	\end{array}} \right], \qquad
    	{\bf{b}}(\tilde{\mathbf{x}}_{\mathsf{a}})=
    	\left[ {\begin{array}{*{20}{c}}
    			{{{\tilde {\mathsf{F}}}_2} - {\mathsf{F}_{2}}({\tilde{\mathbf{x}}_{\mathsf{a}}})}\\
    			{...}\\
    			{{{\tilde {\mathsf{F}}}_{N_r}} - {\mathsf{F}_{{N_r}}}({\tilde{\mathbf{x}}_{\mathsf{a}}})}
    	\end{array}} \right].
    \end{align}
}

    \item \emph{Detection of Propagation Time:} With the knowledge of ${\mathbf{x}}_{\mathsf{a}}$, the propagation time $\tau_m$ can be detected for signals at receive antenna $m$ by $\tau_m = \frac{D(\mathbf{x}_\mathsf{a}, \mathbf{p}_{m})}{c}$.
    
    \item \emph{Subtraction \& Averaging:} By subtracting the propagation time $\tau_m$ component from the noisy phase measurement $\tilde{\eta}_m$, the system clock difference can be differently calculated depending on the choice of the SV's antenna $m$, denoted by $\tilde \sigma_m = \tau_m-\frac{\tilde{\eta}_m}{2\pi\Delta}$. Averaging these values gives an accurate estimate of $\sigma$ such that $\tilde \sigma=\sum_{m=1}^{N_r} \tilde \sigma_m$.
\end{itemize}
\begin{prop} [Synchronization Feasibility Condition]\label{Prop1}
	\emph{At least four SV's antennas are required (${N_r}\geq 4$) to detect the TV-SV system clock difference, according to the solution of \eqref{Optimization} in~\cite{Gauss_Newton}.}
\end{prop}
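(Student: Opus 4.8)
The plan is to prove the threshold $N_r \geq 4$ by a degrees-of-freedom argument on the PDoA location estimate, since the recovery of $\sigma$ becomes possible only after the representative transmit antenna $\mathbf{x}_{\mathsf{a}}$ has been pinned down.

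First I would make explicit why $\sigma$ cannot be extracted from a single antenna's phase. From \eqref{Coefficient_signature waveforms_LoS}, the measurable phase difference at antenna $m$ is $\eta_m = 2\pi\Delta(\tau_m - \sigma)$, in which the unknown delay $\tau_m$ (equivalently, $\mathbf{x}_{\mathsf{a}}$) and the unknown $\sigma$ appear inseparably in one scalar equation. Taking the pairwise difference against the reference antenna yields $\eta_m - \eta_1 = \frac{2\pi\Delta}{c}\mathsf{F}_m(\mathbf{x}_{\mathsf{a}})$, which cancels $\sigma$ identically and leaves a quantity depending only on the location. This is the crux: the $N_r - 1$ differential measurements $\{\tilde{\mathsf{F}}_m\}_{m=2}^{N_r}$ defined in \eqref{Measurements} carry information about $\mathbf{x}_{\mathsf{a}}$ alone, so the entire synchronization task rests on first solving the location problem \eqref{Optimization}.

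Next I would count degrees of freedom. The unknown $\mathbf{x}_{\mathsf{a}} = (x_{\mathsf{a}}, y_{\mathsf{a}}, z_{\mathsf{a}}) \in \mathbb{R}^3$ has three scalar components, so at least three independent constraints are needed to fix it. Each antenna index $m \in \{2, \ldots, N_r\}$ supplies exactly one constraint $\mathsf{F}_m(\mathbf{x}_{\mathsf{a}}) = \tilde{\mathsf{F}}_m$, for a total of $N_r - 1$ equations, whence identifiability forces $N_r - 1 \geq 3$, i.e. $N_r \geq 4$. To phrase this in terms of the stated solution method, the Gauss-Newton update \eqref{Solution} is well-defined only when the normal matrix $\mathbf{G}(\tilde{\mathbf{x}}_{\mathsf{a}})^T \mathbf{G}(\tilde{\mathbf{x}}_{\mathsf{a}})$ is invertible; since $\mathbf{G}$ has size $(N_r - 1) \times 3$, this demands that $\mathbf{G}$ attain full column rank $3$, which is impossible with fewer than three rows. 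Once $\mathbf{x}_{\mathsf{a}}$ is determined, every $\tau_m = D(\mathbf{x}_{\mathsf{a}}, \mathbf{p}_m)/c$ follows and $\sigma$ is read off from any $\eta_m$, so four antennas are both necessary and sufficient for the procedure.

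The step I expect to be the main obstacle is not the counting but verifying that the constraints are genuinely independent, i.e. that $\mathbf{G}$ actually achieves column rank $3$ for a generic SV antenna layout rather than dropping rank under a degenerate geometry such as collinear receive antennas. I would handle this by observing that the rows of $\mathbf{G}$ are the gradients $\partial \mathsf{F}_m / \partial \mathbf{x}_{\mathsf{a}}$, whose directions are set by the differing bearings from $\mathbf{x}_{\mathsf{a}}$ to the receive antennas $\mathbf{p}_m$; provided the $\mathbf{p}_m$ are not arranged so that these gradient directions fail to span $\mathbb{R}^3$, the rank condition holds and the bound $N_r = 4$ is tight.
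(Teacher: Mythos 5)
Your argument is correct and is essentially the justification the paper is implicitly invoking: the paper states this proposition without any proof, deferring entirely to the solvability of the least-squares problem \eqref{Optimization} in the cited Gauss--Newton reference, and your degrees-of-freedom count (three unknown components of $\mathbf{x}_{\mathsf{a}}$ against the $N_r-1$ differential measurements $\tilde{\mathsf{F}}_m$, equivalently the requirement that the $(N_r-1)\times 3$ Jacobian $\mathbf{G}$ attain full column rank so that $\mathbf{G}^T\mathbf{G}$ in \eqref{Solution} is invertible) is exactly the content of that deferral, made explicit. You also correctly isolate the logical structure: the pairwise differencing in \eqref{Measurements} cancels $\sigma$, so the clock difference is recoverable only after $\mathbf{x}_{\mathsf{a}}$, and hence each $\tau_m$, has been determined.

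One caveat: your closing claim that four antennas are ``both necessary and sufficient'' overstates what the counting argument delivers. With $N_r=4$ you have three nonlinear equations in three unknowns; the intersection of three hyperboloids of revolution need not be a single point, so full column rank of $\mathbf{G}$ guarantees a well-defined Gauss--Newton step and local identifiability, not global uniqueness of $\mathbf{x}_{\mathsf{a}}$. The proposition itself only asserts necessity, and the paper's remark on initial-value selection (convergence to the global optimum ``verified by simulations'') tacitly concedes that uniqueness is not established analytically. Your genericity discussion of the rank of $\mathbf{G}$ addresses local degeneracies such as collinear receive antennas, but not this global ambiguity; restricting the final sentence to necessity would make the proof airtight.
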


\begin{figure}[t]\vspace{-0.5cm}
	\centering
	\includegraphics[
	width=280pt]{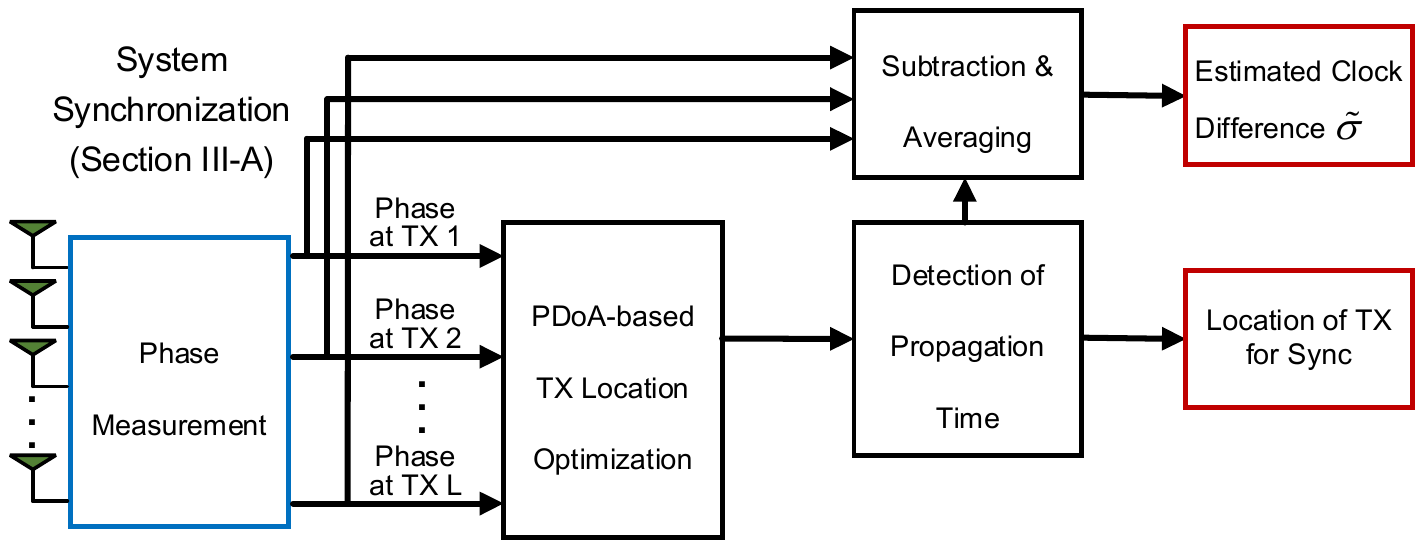}
	\caption{Diagram of the system clock synchronization process.}\label{Sync_Diag}\vspace{-0.8cm}
\end{figure}

    Note that although such an approach is based on a similar principle to the method in~\cite{hyperbolic}, we give a new design to output a clock difference estimation. Moreover, one more antenna location ${\mathbf{x}}_{\mathsf{b}}$ can be detected by applying the same algorithm above on the signature waveform ${\mathbf{B}}_{m}^{\text{LOS}}$ in \eqref{Coefficient_signature waveforms_LoS}  at the receiver. The detected locations ${\mathbf{x}}_{\mathsf{a}}$, ${\mathbf{x}}_{\mathsf{b}}$ of the representative transmit antennas help the COMPOP over the reflection links illustrated later in Section \ref{sec2:Common-point}.

\begin{remark}[TV Recognition in LoS]\label{TVrecognition}
   In LoS case, the SV is able to resolve signals from different TVs according to the AoAs with an antenna array~\cite{MUSIC}. Considering the signature waveform and SFCW transmissions share the same signal paths, the clock difference detected from signature waveforms will be used for SFCWs with the same AoAs, and the multi-point TV position detected in Sec. \ref{sec:Imaging} will be mapped to the same AoAs as well.
\end{remark}
{
\begin{remark}[Initial Value Selection]{It is recommended to use a solution satisfying any three equations in
		\begin{align}\label{Initial}
		\tilde{\mathsf{F}}_{m} = D(\mathbf{x}_{\mathsf{a}}, \mathbf{p}_{m}) - D(\mathbf{x}_{\mathsf{a}}, \mathbf{p}_{1}), \quad m=2,\cdots, {N_r}.
		\end{align}
		 as the initial selection of ${\tilde{\mathbf{x}}_{\mathsf{a}}}$, where the convergence to the global optimal is verified by simulations. }
\end{remark}
}
\begin{remark}[Sampling Requirement]\label{remark1}
		The synchronization procedures are based on the assumption that the phase gap estimated at each two adjacent receive antennas is no larger than $\frac{\pi}{2}$. Thus the distance between each two adjacent receive antenna at the SV needs to satisfy $\Delta_d < \frac{c}{{2\Delta }}$. 
\end{remark}

{
\begin{prop}[Error Covariance]\label{Prop:sync} \emph{We use the error covariance matrix ${\rm{cov}}({\bf{\hat x}}_{\mathsf a})$ as the performance metric of the above synchronization approach, defined as
	\begin{align}\label{ErrorCov}
	{\rm{cov}}({\bf{\tilde x}}_{\mathsf a}) ={{\mathbb E}}\left[\left({\bf{\tilde x}}_{\mathsf a}
	-\mathbb{E}[{\bf{\tilde x}}_{\mathsf a}]\right)
	\left({\bf{\tilde x}}_{\mathsf a}-\mathbb{E}[{\bf{\tilde x}}_{\mathsf a}]\right)^T\right].
	\end{align}
	For tractability, we 
	assume that the phase error follows an \emph{independent identically distributed} (i.i.d.) Gaussian distribution where $\Delta \phi_m \sim  \mathcal{N}(0, \sigma_z^2{\bf I}), \forall m$. As the number of SV's antennas $N_r$ becomes larger, the covariance matrix ${\rm{cov}}({\bf{\tilde x}}_{\mathsf a})$ scales with $\mathcal{O}\left(\frac{1}{N_r-1}\right)$.  }
\end{prop}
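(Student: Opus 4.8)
The plan is to treat the Gauss--Newton iteration \eqref{Solution} as a standard nonlinear least-squares estimator and to carry out a first-order (small-noise) perturbation analysis of its fixed point about the true location. Since the discussion following Problem~\ref{LocOpt} already asserts that the minimizer $\mathbf{x}_{\mathsf{a}}^{*}$ coincides with the true antenna location $\mathbf{x}_{\mathsf{a}}$, at convergence the residual is caused solely by the measurement noise $\{\Delta\phi_m\}$. I would therefore linearize each $\mathsf{F}_m$ about $\mathbf{x}_{\mathsf{a}}$, writing $\mathsf{F}_m(\mathbf{x}_{\mathsf{a}}+\delta\mathbf{x}) \approx \mathsf{F}_m(\mathbf{x}_{\mathsf{a}}) + \mathbf{g}_m^T\delta\mathbf{x}$, where $\mathbf{g}_m^T$ is the row of $\mathbf{G}(\mathbf{x}_{\mathsf{a}})$ associated with index $m$, i.e. $\mathbf{g}_m = \nabla_{\mathbf{x}_{\mathsf{a}}}\mathsf{F}_m = \frac{\mathbf{x}_{\mathsf{a}}-\mathbf{p}_m}{\|\mathbf{x}_{\mathsf{a}}-\mathbf{p}_m\|} - \frac{\mathbf{x}_{\mathsf{a}}-\mathbf{p}_1}{\|\mathbf{x}_{\mathsf{a}}-\mathbf{p}_1\|}$, a difference of two unit direction vectors.

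Next I would propagate the noise through the normal equations. By \eqref{Measurements}, the residual vector evaluated at the true location equals the noise vector $\Delta\boldsymbol\phi = [\Delta\phi_2,\ldots,\Delta\phi_{N_r}]^T$, so the converged step \eqref{Solution} yields, to first order, the estimation error $\delta\mathbf{x} = \tilde{\mathbf{x}}_{\mathsf{a}}-\mathbf{x}_{\mathsf{a}} \approx (\mathbf{G}^T\mathbf{G})^{-1}\mathbf{G}^T\Delta\boldsymbol\phi$. Taking expectations gives $\mathbb{E}[\delta\mathbf{x}]=\mathbf{0}$, so the estimator is unbiased to this order, and substituting into the definition \eqref{ErrorCov} produces
\begin{align}
{\rm cov}(\tilde{\mathbf{x}}_{\mathsf{a}}) = (\mathbf{G}^T\mathbf{G})^{-1}\mathbf{G}^T\, \mathbb{E}\!\left[\Delta\boldsymbol\phi\Delta\boldsymbol\phi^T\right]\mathbf{G}(\mathbf{G}^T\mathbf{G})^{-1}.
\end{align}
Invoking the i.i.d. assumption $\mathbb{E}[\Delta\boldsymbol\phi\Delta\boldsymbol\phi^T]=\sigma_z^2\mathbf{I}$ collapses this sandwich into the clean form ${\rm cov}(\tilde{\mathbf{x}}_{\mathsf{a}}) = \sigma_z^2(\mathbf{G}^T\mathbf{G})^{-1}$.

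For the scaling, observe that $\mathbf{G}(\mathbf{x}_{\mathsf{a}})$ has exactly $N_r-1$ rows (one per reference pair $m=2,\ldots,N_r$), so its Gram matrix is a sum of $N_r-1$ rank-one terms,
\begin{align}
\mathbf{G}^T\mathbf{G} = \sum_{m=2}^{N_r}\mathbf{g}_m\mathbf{g}_m^T = (N_r-1)\,\bar{\mathbf{M}}_{N_r}, \qquad \bar{\mathbf{M}}_{N_r} := \frac{1}{N_r-1}\sum_{m=2}^{N_r}\mathbf{g}_m\mathbf{g}_m^T.
\end{align}
Each $\mathbf{g}_m$ is bounded, while a distributed receive array keeps the averaged Gram matrix $\bar{\mathbf{M}}_{N_r}$ uniformly positive definite as $N_r$ grows; its eigenvalues are then bounded above and away from zero, giving ${\rm cov}(\tilde{\mathbf{x}}_{\mathsf{a}}) = \frac{\sigma_z^2}{N_r-1}\bar{\mathbf{M}}_{N_r}^{-1} = \mathcal{O}\!\left(\frac{1}{N_r-1}\right)$, which is the claim.

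The step I expect to be the main obstacle is justifying that $\bar{\mathbf{M}}_{N_r}$ remains uniformly positive definite (bounded below) as $N_r\to\infty$, since this is precisely what forces the inverse to decay like $1/(N_r-1)$ rather than being dominated by a vanishing smallest eigenvalue. Establishing it requires a mild geometric nondegeneracy condition on the receive-antenna layout (the gradients $\mathbf{g}_m$ must not collapse into a lower-dimensional subspace), together with the full-column-rank condition $N_r\geq 4$ already recorded in Proposition~\ref{Prop1}. A secondary point is to confirm the validity of the first-order linearization, i.e. that higher-order Taylor terms are negligible in the small-$\sigma_z$ regime that underlies the covariance approximation.
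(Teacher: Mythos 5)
Your proposal is correct and follows essentially the same route as the paper's Appendix A: both reduce the covariance to $\sigma_z^2(\mathbf{G}^T\mathbf{G})^{-1}$, write the Gram matrix as a sum of $N_r-1$ rank-one terms, factor out $N_r-1$, and argue that the normalized average stays bounded and invertible. You actually supply two details the paper only asserts — the first-order perturbation derivation of the sandwich formula and the explicit uniform positive-definiteness requirement on the averaged Gram matrix (which the paper handles with an informal "each component converges to its expectation" claim) — so your version is, if anything, more complete.
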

\begin{proof}
	See Appendix A.
\end{proof}

According to Proposition \ref{Prop:sync}, we assume perfect synchronization in the following steps for convenience. However, in simulations, we keep the phase noise and the resultant clock difference detection error through the entire process.

\begin{remark}[Comparison with Existing Synchronization Methods]
	Such a clock synchronization method clearly differs from conventional GPS-based approaches~\cite{GPSsync} since it does not rely on the GPS. Compared to consensus-based synchronization approaches~\cite{YC}, the method we proposed can be processed in real time and thus is more suitable for vehicular sensing.
\end{remark}
}


	\subsection{Step 2: Multi-Point Positioning}\label{sec:Imaging}
	\subsubsection{Overview}
  	As shown in Fig. \ref{LoS_Diag}, the system clock difference $\sigma$ can be removed by the preceding step, facilitating the following procedures. The main idea of the multi-point positioning step in LoS is briefly illustrated as follows. We first show that the received signal $y_m^{k}$ can be presented as a function of the indicator $\mathsf{I}_{\{\mathbf{x} \in \mathcal{X}\}}$ which represents the power spectrum of the transmit antennas, denoted as $y_m^{k} = g(\mathsf{I}_{\{\mathbf{x} \in \mathcal{X}\}})$. Therefore, the position information $\mathcal{X}$ of the TV can be retrieved from the received signals through an inverse function and a following peak detection. We give the estimation of $\mathcal{X}$ in the following algorithm, divided into two phases 1) Fourier transform; and 2) peak detection, as described in Fig. \ref{LoS_Diag}.
  	
  	\subsubsection{Algorithm Description}
  	  	\begin{table}[t] \vspace{-0.5cm} 
  		\small
  		\caption{Fourier Transform Pairs}  
  		\begin{center}  
  			\begin{tabular}{|l|l|}  
  				\hline  
  				Fourier Transform & Inverse Fourier Transform  \\ \hline  
  				$ \mathsf{FT}_{\mathsf{2D}}\left( \mathsf{h}(\mathbf{x})\right)  = \int_{\mathbb{R}^2}{\mathsf{h}(\mathbf{x})\exp{(-j\frac{2\pi}{c}\mathbf{f}^T\mathbf{x})}dxdy}$ & $ \mathsf{FT}_{\mathsf{2D}}^{-1}\left( \mathsf{H}(\mathbf{f})\right)  = \int_{\mathbb{R}^2}{\mathsf{H}(\mathbf{f})\exp{(j\frac{2\pi}{c}\mathbf{f}^T\mathbf{x})}df^{(x)}df^{(y)}}$  \\ \hline
  				$\mathsf{FT}_{\mathsf{3D}}\left( \mathsf{h}(\mathbf{x})\right)  = \int_{\mathbb{R}^3}{\mathsf{h}(\mathbf{x})\exp{(-j\frac{2\pi}{c}\mathbf{f}^T\mathbf{x})}d\mathbf{x}}$ & $\mathsf{FT}_{\mathsf{3D}}^{-1}\left( \mathsf{H}(\mathbf{f})\right)  = \int_{\mathbb{R}^3}{\mathsf{H}(\mathbf{f})\exp{(j\frac{2\pi}{c}\mathbf{f}^T\mathbf{x})}d\mathbf{f}}$  \\ \hline  
  				\tabincell{l}{$\mathsf{FT}_{\mathsf{2D}}^D(\{\mathsf{h}[x,y]\})$\\ $= \sum\limits_{n_x = -\infty}^{\infty}\sum\limits_{n_y = -\infty}^{\infty}\!\!\!\mathsf{h}[x,y]\exp\!\left( \!-j\frac{2\pi}{c}( {f^{(x)}n_x}\!+\!{f^{(y)}n_y})\right)$} & \tabincell{l}{${\mathsf{FT}_{\mathsf{2D}}^D}^{-1}(\mathsf{H}(\mathbf{f}))$\\ $= \int_{\mathbb{R}^2}\!\mathsf{H}(\mathbf{f})\exp\!\left( j\frac{2\pi}{c}\!\left( {f^{(x)}n_x}\!+\!{f^{(y)}n_y}\right)\!\right)\!df^{(x)}df^{(y)}$}  \\ \hline 
  				\tabincell{l}{$\mathsf{FT}_{\mathsf{3D}}^D(\{\mathsf{h}[x,y,z]\}) $\\$= \sum\limits_{n_x = -\infty}^{\infty}\sum\limits_{n_y = -\infty}^{\infty}\sum\limits_{n_z = -\infty}^{\infty}\mathsf{h}[x,y,z]\exp\left( -j\frac{2\pi}{c}\mathbf{f}^T\mathbf{n_x}\right)$} & \tabincell{l}{${\mathsf{FT}_{\mathsf{3D}}^D}^{-1}(\mathsf{H}(\mathbf{f}))= \int_{\mathbb{R}^3}\mathsf{H}(\mathbf{f})\exp\left( j\frac{2\pi}{c}\mathbf{f}^T\mathbf{n_x}\right)d\mathbf{f}$} \\ \hline 
  				\multicolumn{2}{|l|}{\tabincell{l}{$\mathbf{x}=(x,y,z)^T$ represents a vector in 3D space, and $\mathbf{f}=\left( f^{(x)},f^{(y)},f^{(z)}\right)^T $ represents a vector in 3D frequency\\ domain, and $\mathbf{n_x} = (n_x,n_y,n_z)^T$ denotes the index of the sample $\mathsf{h}[x,y,z]$ in the set $\{\mathsf{h}[x,y,z]\}$.}}\\ \hline
  			\end{tabular}  
  		\end{center}  \vspace{-0.5cm} \label{TableFourier}
  	\end{table} 
  	 The synchronized demodulation \eqref{demodulated_LoS} enables to express  ${y}_m^{k}$ as \eqref{LoSLink}, which can be rewritten in a 3D surface integral form as
  	\begin{align}\label{imageSignal}
  	{y}_m^{k} = {\mathsf{y}}(\mathbf{p}_m, f_k)
  	= \int_{\mathbb{R}^3} {\mathsf{I}_{\{\mathbf{x} \in \mathcal{X}\}}{\exp\left( - j\frac{2\pi f_k}{c}D{(\mathbf{x}, \mathbf{p}_{m})} \right)}d{\mathbf{x}}}, 
  	\end{align}
  	{ where ${\mathsf{y}}(\mathbf{x}, f): \mathbb{R}^4\to\mathbb{R}$ is a continuous function,} and $D{(\mathbf{x}, \mathbf{p}_{m})} = \|\mathbf{x}-\mathbf{p}_m\|$ represents the Euclidean distance between point $\mathbf{x}=(x,y,z)^T$ and the location of the SV's antenna $m$, denoted by $\mathbf{p}_m = (p_m^{(x)},p_m^{(y)},p_m^{(z)})^T$.  Then based on \eqref{imageSignal}, we have the following lemma based on the scalar diffraction idea in \cite{AngularSpectrum}.
  	\begin{lemma}\label{lemma1}
  		Consider the indicator function of  transmit antennas $\mathsf{I}_{\{\mathbf{x} \in \mathcal{X}\}}$ and the function $\mathsf{y}\left(\mathbf{x}, f\right)$ representing signals at the receiver. The following equality holds in the frequency domain as
  		\begin{align}\label{freq_eq}
  		\mathsf{FT}_{\mathsf{2D}}\left(\mathsf{s}(x,y,f)\right) \big|_{f=\|\mathbf{f}\|} = \mathsf{FT}_{\mathsf{3D}} (\mathsf{I}_{\{\mathbf{x} \in \mathcal{X}\}}).
  		\end{align}
  		where $\mathbf{f} = \left(f^{(x)}, f^{(y)}, f^{(z)}\right)^T$ is a spatial frequency vector, and $\mathsf{s}(x,y,f) = {\mathsf{y}}(x,y,0,f)$ is a continuous function, representing received signals at the $X-Y$ plane $z=0$. The $\mathsf{FT}_{\mathsf{2D}}(\cdot)$ and $\mathsf{FT}_{\mathsf{3D}}(\cdot)$ are 2D and 3D Fourier transforms defined in Table \ref{TableFourier}.
  	\end{lemma}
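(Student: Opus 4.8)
The plan is to prove the identity by inserting the spatial integral representation \eqref{imageSignal} of the boundary field into the definition of the 2D transform and reducing everything to a single canonical integral: the plane-wave (angular-spectrum) decomposition of a spherical phase factor. Throughout I write the wavenumber as $\frac{2\pi f}{c}$ and use the scaled spatial frequencies $(f^{(x)}, f^{(y)}, f^{(z)})$ as in Table \ref{TableFourier}. To keep the source variable separate from the receiver-plane coordinates $(x,y)$ appearing in $\mathsf{s}$, I denote the integration point in \eqref{imageSignal} by $\mathbf{x}'=(x_0,y_0,z_0)^T$.

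First I would restrict \eqref{imageSignal} to the receiver plane $z=0$, giving $\mathsf{s}(x,y,f)=\mathsf{y}(x,y,0,f)=\int_{\mathbb{R}^3}\mathsf{I}_{\{\mathbf{x}'\in\mathcal{X}\}}\exp(-j\frac{2\pi f}{c}\|\mathbf{x}'-(x,y,0)^T\|)d\mathbf{x}'$, apply $\mathsf{FT}_{\mathsf{2D}}(\cdot)$ in the coordinates $(x,y)$, and interchange the compactly supported source integral with the planar transform. This leaves, for each source point $\mathbf{x}'$, the inner integral $\int_{\mathbb{R}^2}\exp(-j\frac{2\pi f}{c}\sqrt{(x-x_0)^2+(y-y_0)^2+z_0^2})\exp(-j\frac{2\pi}{c}(f^{(x)}x+f^{(y)}y))dx\,dy$. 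The shift $u=x-x_0$, $v=y-y_0$ pulls out the transverse phase $\exp(-j\frac{2\pi}{c}(f^{(x)}x_0+f^{(y)}y_0))$ and reduces the remainder to the canonical transform of the spherical phase $\exp(-j\frac{2\pi f}{c}\sqrt{u^2+v^2+z_0^2})$.

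The crux, and the step I expect to be the main obstacle, is evaluating this canonical transform. Here I would invoke the scalar-diffraction/angular-spectrum identity of \cite{AngularSpectrum} (equivalently the Weyl plane-wave expansion of a spherical wave, or a stationary-phase evaluation): its 2D Fourier transform equals, up to an amplitude factor, $\exp(-j\frac{2\pi}{c}f^{(z)}z_0)$ with $f^{(z)}=\sqrt{f^2-(f^{(x)})^2-(f^{(y)})^2}$, where the positive (forward-propagating) branch is selected because the TV lies in front of the SV along the $Z$-axis (the AoA direction) and evanescent components are discarded. Imposing the on-shell constraint $f=\|\mathbf{f}\|$ is precisely the statement that $f^{(z)}$ is this longitudinal frequency, i.e. it is the Stolt reparametrization mapping the measured $(f^{(x)},f^{(y)},f)$ onto the spatial-frequency point $(f^{(x)},f^{(y)},f^{(z)})$. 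The amplitude factor is immaterial: the channel gain and pathloss are treated as constants (Remark \ref{SaomplingProcess}) and the positioning in Section \ref{sec:Imaging} only reads off peak locations, so I would absorb it into the overall normalization and must only track that the correct branch and support constraint are used.

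Finally I would reassemble the two phase factors into $\exp(-j\frac{2\pi}{c}(f^{(x)}x_0+f^{(y)}y_0+f^{(z)}z_0))=\exp(-j\frac{2\pi}{c}\mathbf{f}^T\mathbf{x}')$ evaluated at $f=\|\mathbf{f}\|$, substitute back into the source integral, and recognize the result as $\int_{\mathbb{R}^3}\mathsf{I}_{\{\mathbf{x}'\in\mathcal{X}\}}\exp(-j\frac{2\pi}{c}\mathbf{f}^T\mathbf{x}')d\mathbf{x}'=\mathsf{FT}_{\mathsf{3D}}(\mathsf{I}_{\{\mathbf{x}\in\mathcal{X}\}})$, which is exactly \eqref{freq_eq}. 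As a consistency check one can read the identity through the convolution theorem: since $\mathsf{y}(\cdot,f)=\mathsf{I}_{\{\mathbf{x}\in\mathcal{X}\}}*G_f$ with the radially symmetric kernel $G_f(\mathbf{p})=\exp(-j\frac{2\pi f}{c}\|\mathbf{p}\|)$, the 3D spectrum of $\mathsf{y}$ is that of $\mathsf{I}_{\{\mathbf{x}\in\mathcal{X}\}}$ multiplied by a transfer function concentrated on the sphere $\|\mathbf{f}\|=f$, and slicing the field at $z=0$ (equivalently integrating the spectrum over $f^{(z)}$) collapses this sphere onto the single on-shell value of $f^{(z)}$, reproducing \eqref{freq_eq}.
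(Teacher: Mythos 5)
Your proof is correct and follows essentially the same route as the paper's Appendix B: both arguments hinge on the angular-spectrum (Weyl) plane-wave decomposition of the spherical phase factor from \cite{AngularSpectrum} together with an interchange of the source and spatial-frequency integrals, you simply run it in the forward-transform direction (computing $\mathsf{FT}_{\mathsf{2D}}$ of the kernel directly) where the paper writes $\mathsf{s}$ as the inverse 2D transform of the on-shell 3D spectrum and then inverts. Your explicit handling of the amplitude factor and the choice of the forward-propagating branch is a slightly more careful accounting than the paper's, which states the decomposition with unit amplitude, but the substance is the same.
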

    \begin{proof}
    	Please refer to Appendix B.
    \end{proof}

    \begin{remark}[Sampling at the Receiver]
    	Based on the received signals, we are only able to collect samples of $\mathsf{y}(\mathbf{x},f)$ discretely over the receive antenna array. Recall that $Z$-axis is in the direction of AoA known at the SV, leading to the following approximation
    	\begin{align}\label{approxSample}
    	\mathsf{s}(p_m^{(x)},p_m^{(y)},f_k) = {{\mathsf{y}}}\!\left(p_m^{(x)},p_m^{(y)},0, f_k\right) &\approx{{\mathsf{y}}}\left(\mathbf{p}_m, f_k\right)\exp\left(\!-j\frac{2\pi f_k}{c}p_m^{(z)}\!\right) ,
    	\end{align}
    	which holds tightly when the TV-SV distance is much larger than the SV's size.	
    	Therefore, based on the approximation in \eqref{approxSample}, discrete samples on $\mathsf{s}(x,y,f)$ are also available.
    \end{remark}

    \begin{remark}[Linear Interpolation]
    	Since the receive antennas may not be regularly distributed, a linear interpolation \cite{interpolation} is adopted to provide samples of $\mathsf{s}(x,y,f)$ uniformly spaced along the $X$ and $Y$ directions. Take samples at two adjacent receive antennas in the $X$ direction as an example, whose coordinates are supposed to be $\mathbf{p}_{m_1}$ and $\mathbf{p}_{m_2}$, and ${p}_{m_1}^{(y)}={p}_{m_2}^{(y)}$. { Then the interpolation generates samples between these two points as
    	\begin{align}\label{interpolation}
    	\mathsf{s}(x,p_{m_1}^{(y)},f) = \frac{x-p_{m_1}^{(x)}}{p_{m_2}^{(x)}-p_{m_1}^{(x)}}\mathsf{s}(p_{m_2}^{(x)},p_{m_2}^{(y)},f)+\frac{p_{m_2}^{(x)}-x}{p_{m_2}^{(x)}-p_{m_1}^{(x)}}\mathsf{s}(p_{m_1}^{(x)},p_{m_1}^{(y)},f), \qquad p_{m_1}^{(x)} <x< p_{m_2}^{(x)}.
    	\end{align} }
    	By using interpolation in \eqref{interpolation} along  $X$ and $Y$ directions sequentially, continuous samples of $\mathsf{s}(x,y,f)$ are generated on the plane $z=0$, and uniformly spaced samples are available as well.
    	The samples of $\mathsf{s}(x,y,f)$ after the interpolation are denoted as $\{\mathsf{s}[x,y,f]\}$. Since SFCW signals are used, the samples are naturally discrete and uniformly spaced in frequency.  	    	
    \end{remark}
    { It is also worthwhile to notice that common interpolation methods, e.g., linear, spline and polynomial interpolations are all capable of accurate resampling here. Therefore, we simply adopt linear interpolation for low-complexity and the interpolation error is omitted in the following presentation for convenience. }

  	Moreover, to preserve complete information in the frequency domain with discrete samples, the distances between two adjacent receive antennas along $X$ or $Y$ directions are both required to be less than $\frac{c}{2(f_1+f_K)}$ according to the Nyquist sampling criterion, which will be elaborated later in Sec. \ref{sec:res}.  	
  	 Here we simply assume that the samplings at the receiver satisfy the Nyquist criterion and the interpolation is accurate. Then the LHS of \eqref{freq_eq} with discrete inputs turns to be 
  	\begin{align}\label{outLeft}
  	\mathsf{FT}_{\mathsf{2D}}^{D}\left(\{ \mathsf{s}[x,y,f]\}\right) = \mathsf{S}(f^{(x)}, f^{(y)}, f),
  	\end{align} 
  	where $\mathsf{FT}_{\mathsf{2D}}^{D}$ refers to the 2D discrete-time Fourier transform defined in Table \ref{TableFourier}, and the output $\mathsf{S}(f^{(x)}, f^{(y)}, f)$ is a function continuous in $f^{(x)}$ and $f^{(y)}$ domain while discrete in $f$ direction. By considering $\mathsf{S}(f^{(x)}, f^{(y)}, f)$ as a function of $\mathbf{f}$ according to the relation ${f=\|\mathbf{f}\|}$, the continuous function $\left\{\mathsf{S}(\mathbf{f})\big| \forall \mathbf{f} \in \mathbb{R}^3\right\}$ over the 3D frequency domain can be estimated by an interpolation \cite{interpolation}. Then the LHS of \eqref{freq_eq} can be approximated as
  	\begin{align}\label{approximation}
  	\mathsf{FT}_{\mathsf{2D}}\left(\mathsf{s}(x,y,f)\right)\big|_{f=\|\mathbf{f}\|} \approx \mathsf{L}\bigg(\mathsf{S}(f^{(x)}, f^{(y)}, f)\big|_{f=\|\mathbf{f}\|}\bigg) = \mathsf{L}\bigg(\mathsf{FT}_{\mathsf{2D}}^{D}\left( \{\mathsf{s}[x,y,f]\}\right)\big|_{f=\|\mathbf{f}\|}\bigg),
  	\end{align}  	
  	where  $\mathsf{L}(\cdot)$ represents the linear interpolation process \cite{interpolation}. 
  	
  	On the other hand, the RHS of \eqref{freq_eq} can be represented in the discrete case as $\mathsf{FT}_{\mathsf{3D}}^D\left( \{[\mathsf{I}_{\{\mathbf{x} \in \mathcal{X}\}}]\}\right)$, where $\mathsf{FT}_{\mathsf{3D}}^D$ is the 3D discrete-time Fourier transform. Then according to \eqref{approximation}, the equality in \eqref{freq_eq} directly gives
  	\begin{align}\label{discreteapprox}
  	\mathsf{L}\bigg(\mathsf{FT}_{\mathsf{2D}}^D\left( \{\mathsf{s}[x,y,f]\}\right)\big|_{f=\|\mathbf{f}\|}\bigg) \approx \mathsf{FT}_{\mathsf{3D}}^D\left(  \{[\mathsf{I}_{\{\mathbf{x} \in \mathcal{X}\}}]\}\right).
  	\end{align}
  	The approximation in \eqref{discreteapprox} comes from the interpolation process, which is proved to be accurate by simulations.
    \begin{itemize}
	\item \emph{A. Fourier Transform:}  According to \eqref{discreteapprox}, the indicator function $ \mathsf{I}_{\{\mathbf{x} \in \mathcal{X}\}}$ can be estimated in a discrete form via a 3D inverse discrete-time Fourier transform as
	\begin{align}
	\Phi[{\mathbf{x}}] =  {\mathsf{FT}_{\mathsf{3D}}^{D}}^{-1}\left\{\mathsf{L}\bigg(\mathsf{FT}_{\mathsf{2D}}^D\left(\{ \mathsf{s}[x,y,f]\}\right)\big|_{f=\|\mathbf{f}\|}\bigg) \right\} \approx  \{[\mathsf{I}_{\{\mathbf{x} \in \mathcal{X}\}}]\} .
	\end{align}

  		\item \emph{B. Peak Detection:} { Locations of the transmit antennas can be estimated by peak detection over the power spectrum $\{|\Phi[{\mathbf{x}}]|\}$ after a power normalization, namely,
  		\begin{align}\label{Mapping_Rule}
  		 \mathcal{X} = \bigg\{\mathbf{x}\ \bigg|\ \frac{|\Phi[\mathbf{x}]|}{\{|\Phi[\mathbf{x}]|\}_{\text{max}}} \geq\nu\bigg\},
  		\end{align}
  		where $\nu$ represents the detection threshold appropriately selected, and $\{|\Phi[\mathbf{x}]|\}_{\text{max}}$ is the maximum value of the detected power spectrum $\{|\Phi[{\mathbf{x}}]|\}$. }
  	\end{itemize}
    Therefore, according to the algorithm above, the multi-point TV position can be retrieved from  signals over the antenna array at the receiver, for the LoS case.


	
\subsection{Resolution Analysis}\label{sec:res} 
This subsection provides analysis on resolution of the multi-point position retrieved by the above algorithm. Here we consider the 3D antenna array at the receiver as an `equivalent aperture' located in the $X-Y$ plane $z=0$, as shown in Fig. \ref{resolution}(a).

For ease of understanding, we first introduce the following terminologies.

\begin{definition}[Bandwidth]\label{Def:Bandwidth} {A bandwidth $\mathcal{B}$ in a direction of the frequency domain (at the receiver) is defined as the maximum frequency difference of the received signal along the corresponding direction in the spatial domain.} 
\end{definition}


\begin{definition}[Resolution]{A resolution $\delta$, representing the positioning accuracy, is defined as the minimum distance to differentiate multiple objects, TV's antennas in our work. The resolution is said to be better when the minimum distance is smaller. 
Accordingly, resolutions in $X$ or $Y$ directions and $Z$ direction are respectively called the \emph{azimuth} and \emph{range} resolutions. }
\end{definition}

The direct relation between    
bandwidth $\mathcal B$ and resolution $\delta$ is established as 
\begin{align}
\delta=\frac{c}{\mathcal{B}},
\end{align}
where $c$ is the light speed. Based on this relation, the azimuth and range resolutions are firstly analyzed, and then sampling requirements to achieve a given resolution  are provided next. 
\subsubsection{Azimuth Resolution} 
Suppose the detected TV position $\mathcal{X}$ of \eqref{Mapping_Rule}  is projected on the equivalent aperture's center denoted by $m_c$ as shown in Fig.~\ref{resolution}(b). We focus on the spatial resolution in $Y$ direction since resolutions in $X$ and $Y$ directions become equivalent when the aperture is a $D$-by-$D$ square. 
Consider signals at frequency $f_k$, the phase difference between received signal at $m_1$ and $(m_1-\Delta_y)$ is approximately $\frac{D}{4\sqrt{R^2+(D/2)^2}}\frac{2\pi f_k}{c}\Delta_y$ w.r.t. a subtle distance $\Delta_y$, corresponding to frequency $\frac{D}{\sqrt{4R^2+D^2}}{f_k}$. Moreover, the phase difference between received signal at $(m_c+\Delta_y)$ and $m_c$ is $0$, corresponding to frequency $0$.  According to Definition \ref{Def:Bandwidth}, the bandwidth in $f^{(y)}$ direction, denoted by ${\cal B}_y$ is approximately
\begin{align}
{\cal B}_y \approx 2\times \mathbb{E}_k \left[\frac{D}{{\sqrt {{4{ {{R}} }^2} + {D^2}} }} f_k-0 \right]=\frac{2D}{{\sqrt {{4{ {{R}} }^2} + {D^2}} }} f_c,
\end{align}
where $f_c=\frac{f_1+f_K}{2}$. 
The azimuth resolution in $Y$ direction can be straightforward obtained as
\begin{align}\label{spatialResolution}
\delta _y = \frac{c}{\mathcal{B}_y} \approx \frac{{c\sqrt {{4{ {{R}} }^2} + {D^2}} }}{{2{f_c}D}}.
\end{align}
It is shown to be proportional to the term $\frac{{\sqrt {{4{{{R}} }^2} + {D^2}} }}{D}$, meaning that higher azimuth resolution can be achieved when the TV is closer and the aperture size becomes larger.		

\subsubsection{Range Resolution} 
Samplings at a fixed location $\left(p_m^{(x)}, p_m^{(x)}, 0\right)$ consist of signals at all frequencies used.
Thus bandwidth in the $f_z$ direction is 
$\mathcal{B}_z \approx (f_K-f_1)$, and the range resolution can be obtained as
\begin{align}\label{RangeResolution}
{\delta _z} = \frac{c}{\mathcal{B}_z}\approx \frac{{c }}{{\left( {{f_K} - {f_1}} \right)}},
\end{align}
where $f_K$ and $f_1$ are maximum and minimum frequencies in the SFCW specified in \eqref{transmitSig}.


\begin{figure}[t]\vspace{-0.5cm}
	\begin{minipage}[t]{0.5\linewidth}
		\centering
		\includegraphics[height = 90pt]{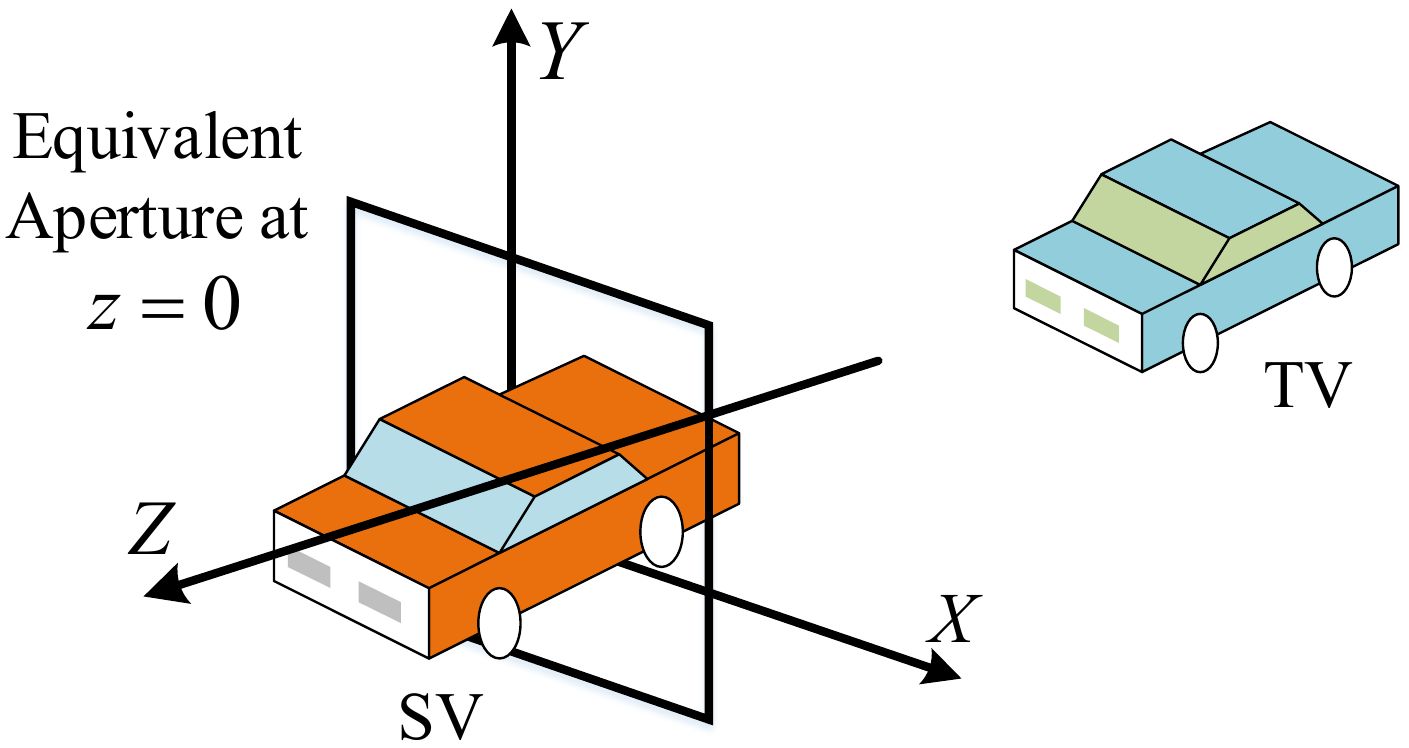}
		\caption*{(a) The equivalent aperture plane.}
		\label{aperture}
	\end{minipage}
\hfill
	\begin{minipage}[t]{0.5\linewidth}
		\centering
		\includegraphics[height = 90pt]{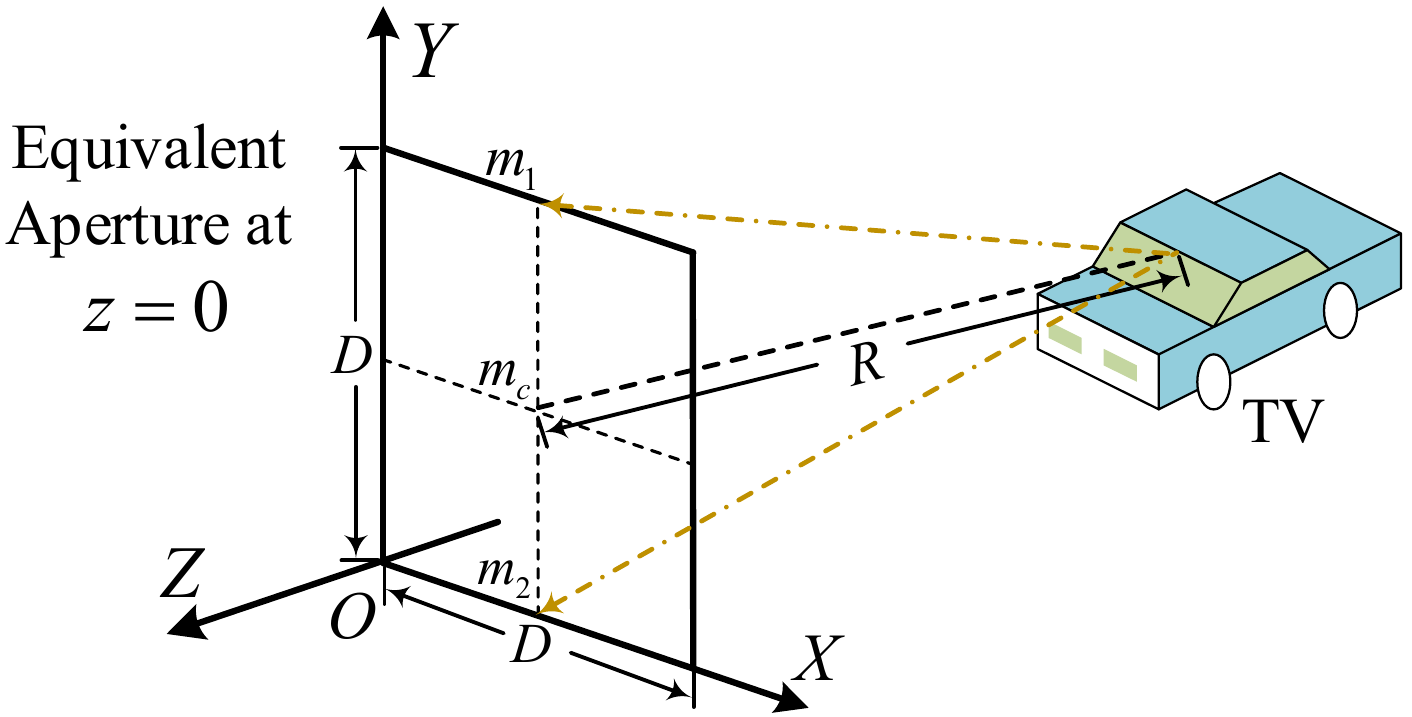}
		\caption*{(b) Illustration of azimuth and range resolutions.}
		\label{resolutionIllustration}
	\end{minipage}
	\caption{Illustration of azimuth and range resolutions.}\label{resolution}\vspace{-0.5cm}
\end{figure}
\begin{remark}[Sampling  Requirements] To achieve the above resolutions, there exist two kinds of sampling requirements on spatial and frequency domains.
	\begin{itemize}
		\item {\bf Spatial Sampling:} The spatial sampling represents the distance between two adjacent receive antennas. To achieve the resolution in \eqref{spatialResolution}, the receive antennas deployment over the ``equivalent aperture'' needs to meet the Nyquist sampling criterion such that the required sampling intervals are less than $\Delta_x$ and $\Delta_y$ to avoid aliasing. Therefore, the distances between two adjacent receive antennas along $X$ or $Y$ directions are
		\begin{align}\label{samplingReq}
		{\Delta _x} = {\Delta _y} \le \min_R \left\{\frac{c}{{2{f_c}}}\frac{{\sqrt {{4{ {{R}} }^2} + {D^2}} }}{2D}\right\}\mathop  = \limits^{(a)} \frac{c}{{4{f_c}}},
		\end{align}
		where (a) follows for the worst case with $R=0$. Moreover, since the $Z$-axis is defined as the direction of AoA which varies over time, the distance between two adjacent receive antennas in all directions should be smaller than $\frac{c}{{4{f_c}}}$.
		\item {\bf Frequency sampling:} {
			The frequency sampling interval refers to the frequency gap $\Delta$ to achieve the maximum ranging distance $R_{\max}$, given the resolution $\delta_z$ \eqref{RangeResolution}.
			Specifically, the number of the minimum samples for the resolution $\delta_z$ 
			is given as $\frac{R_{\max}}{\delta_z}$, which provides the following upper bound as
			\begin{align}\label{upperbound}
			\frac{R_{\max}}{\delta_z}\leq \frac{f_K-f_1}{\Delta}. 
			\end{align}
			Plugging \eqref{RangeResolution} into \eqref{upperbound} and with some manipulations, we have 
			\begin{align}
			\Delta\leq \frac{c}{R_{\max}}.
			\end{align}
		}
	\end{itemize}  
\end{remark}


\begin{figure}[t]
	\begin{minipage}[t]{0.5\linewidth}			\centering
		\includegraphics[scale = 0.65]{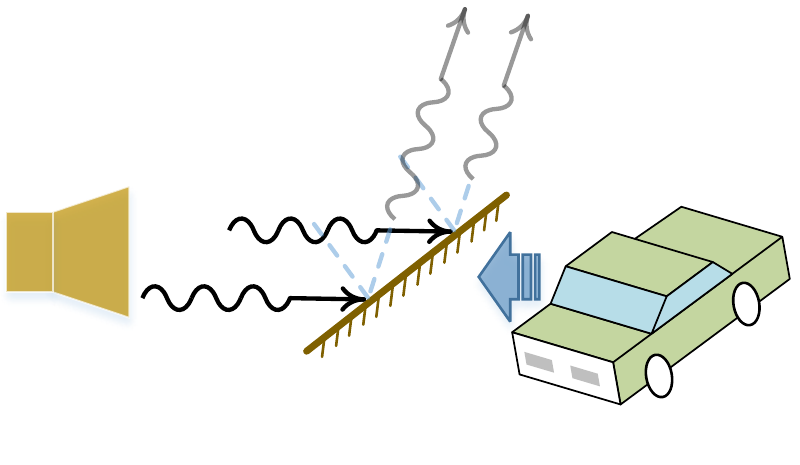}
		\caption*{(a) The effect of the TV's orientation.}			\label{fig:side:RCSph}
	\end{minipage}
\hfill
	\begin{minipage}[t]{0.5\linewidth}			\centering
		\includegraphics[scale = 0.45]{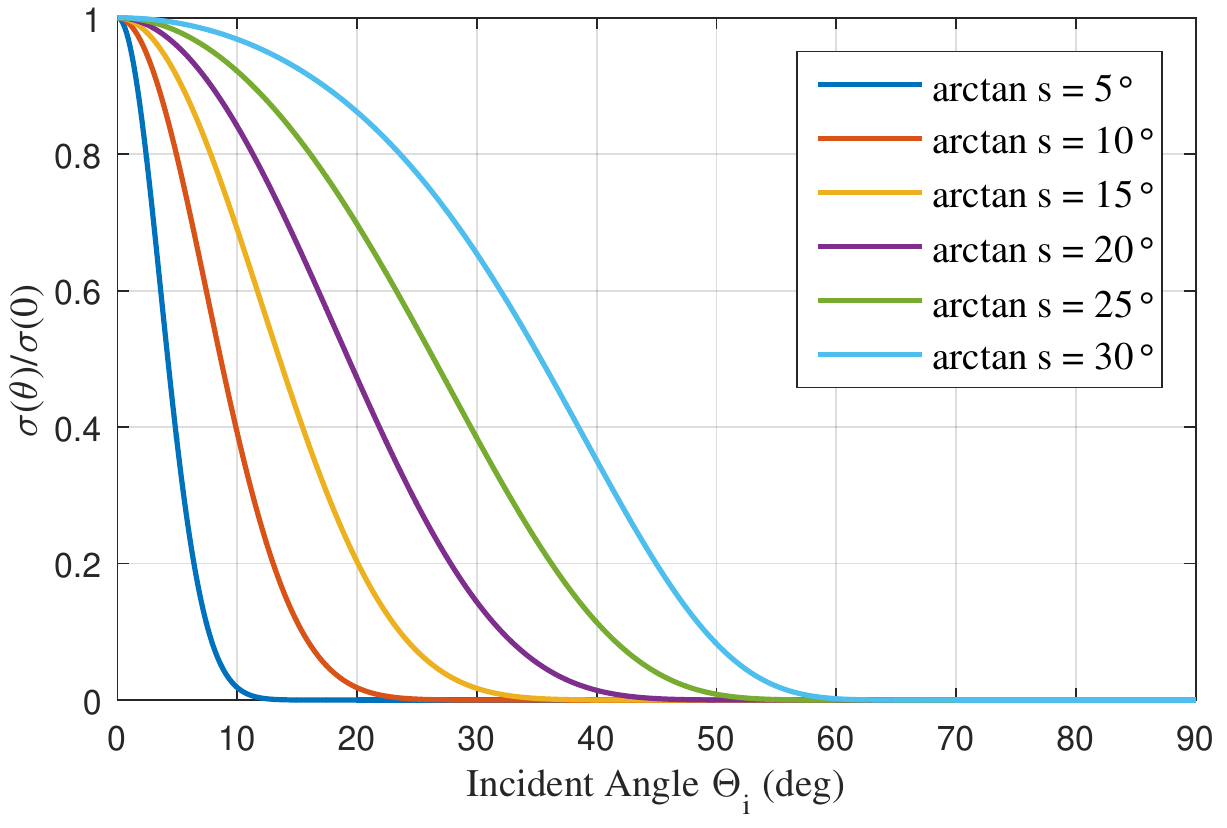}
		\caption*{(b) The impact of RCS versus the incident angle.}
		\label{fig:side:RCS}
	\end{minipage}
	\caption{The effect of RCS on positioning techniques based on RADAR systems.}\label{RCS_fig}\vspace{-0.5cm}
\end{figure}

\subsection{Propagation Loss Analysis}\label{Comparison_LoS}
The signal propagation loss determines the power level of the received signals, so as the performance of the detected position against noise. In this subsection, we analyze the propagation loss of the proposed COMPOP and existing multi-point positioning techniques, i.e. RADAR or LIDAR, over a LoS link for signal power comparison. It is obvious that the proposed technique experiences less power loss than RADAR-based techniques due to the half propagation distance. Besides, the orientation of the target surface 
is another factor to affect the reflected signal power as shown in Fig.~\ref{RCS_fig}(a). 
Specifically, for RADAR-based multi-point positioning techniques, the received signal power can be expressed as
\begin{align}\label{PowerLevel}
P_{r}^{(c)}={{P_{t}G_{t}} \over {4\pi R^{2}}}\sigma \left( \Theta_i  \right) {{1} \over {4\pi R^{2}}}A_{\mathrm {eff}} = {{P_{t}G_{t}\lambda^2} \over {64\pi^3 R^{4}}}\sigma \left( \Theta_i  \right),
\end{align}
where $P_t$ and $G_t$ are the input power and gain of the transmit antennas, and $R$ is the distance from the RADAR to the target. The other two factors, $\sigma$ and $A_{\mathrm {eff}} = \frac{\lambda^2}{4\pi}$ are \emph{radar cross-section} (RCS) of the target and the effective area of the RADAR receive antenna, respectively~\cite{RCSbook}. The RCS $\sigma$ 
is mathematically defined as
\begin{align}\label{RCS}
\sigma \left( \Theta_i  \right){\rm{ = }}\frac{{{{\left| {\Gamma_s(0)} \right|}^2}}}{{2{s^2}}}{\sec ^4}\Theta_i \exp \left\{ { - \frac{{{{\tan }^2}{\Theta_i}}}{{{s^2}}}} \right\},
\end{align}
where {$\Gamma_s(0)$ is the Fresnel reflection coefficient for normal incidence for each $s^2$, $\Theta_i$ is the incident angle} and $s^2$ is a parameter measuring the roughness of the target surface~\cite{RCSpaper}. The RCS is uncontrollable in the RADAR system design. From \eqref{PowerLevel} and \eqref{RCS}, the impact of the target's surface orientation, related to the incident angle $\Theta_i$, is shown clearly.  Fig.~\ref{RCS_fig}(b) indicates that the received signal power $P_r^{(c)}$ degrades seriously when the incident angle is large.

On the other hand, according to \eqref{PowerLevel}, the received signal power of our proposed technique can be expressed as $P_{r}={{P_{t}G_{t}\lambda^2} \over {{(4\pi R)}^{2}}}$, where the RCS has no influence on the received signal power of our proposed COMPOP since  the transmit antennas are isotropic. Moreover, reducing the signal propagation distance to half brings a power gain of $4\pi R^2$. Therefore, the proposed COMPOP is able to function well regardless the orientation of the TV and guarantee a signal power gain compared with conventional RADAR systems.

\section{COMPOP Using Surface Reflection in NLoS }\label{sec:NLoS}
Consider a NLoS scenario where the TV's LoS path to the SV is blocked but several NLoS paths reflected by nearby vehicles are available. We exploit mmWave spectrum's specular signal reflection on the smooth surfaces of nearby vehicles\footnote{MmWave spectrum has two reflection properties on vehicles' metal surfaces, where 1) its reflection coefficient is almost $1$ unless the incident angle is near $\frac{\pi}{2}$ or $0$, and 2) mmWave signals experience the mirror-like specular reflection \cite{Table60GHz}.}, establishing mirror-reflection links between the TV and the SV. Based on signals reflected from different paths, given as \eqref{demodulated}, multiple virtual vehicles symmetric to the actual TV concerning the reflection surfaces can be detected via LoS COMPOP. Since SV has no prior information of the reflection surfaces, an additional step should be required to combine multiple virtual TVs into the actual one.
To this end, we design a two-stage approach following 1) LoS COMPOP and 2) virtual TV combining, as illustrated in Fig. \ref{NLoS_Diag}.

{ Additional assumptions used in this section are summarized for clarification in the sequel:
\begin{itemize}
	\item {\bf Specular Reflection:} The surfaces of nearby vehicles or buildings are able to provide specular reflections for mmWave signals as mentioned above.
	\item {\bf Reflection Surfaces Vertical to the Ground:} We consider the case where the reflection surfaces are vertical to the ground, which is common in practice. This assumption can be easily relaxed as illustrated in Remark \ref{Vertical} later.
\end{itemize} }

\subsection{LoS COMPOP}
Firstly, we assume that the signals reflected from different paths can be perfectly resolved at the SV by the AoA differentiation. Then based on signals received from each path, the LoS COMPOP can be accomplished by the following two steps.
\subsubsection{Synchronization}\label{sec2:Syn}
The operation is similar to the synchronization under LoS in Sec. \ref{sec:Synchronization}. 
Consider the signals from path $\ell$ as an example.
Let $\sigma$ denote the system clock difference satisfying the following equations: 
\begin{align}\label{Sigma_NLoS}
	\sigma=\tau_{\mathsf{a},m}^{(\ell)}-\frac{\eta_{\mathsf{a},m}^{(\ell)}}{2\pi\Delta}=\tau_{\mathsf{b},m}^{(\ell)}-\frac{\eta_{\mathsf{b},m}^{(\ell)}}{2\pi\Delta},
\end{align}
where $\{\tau_{\mathsf{a},m}^{(\ell)}, \tau_{\mathsf{b},m}^{(\ell)}\}$ and $\{\eta_{\mathsf{a},m}^{(\ell)}, \eta_{\mathsf{b},m}^{(\ell)}\}$ represent the propagation time and phase difference of the signature waveforms from the antennas $\mathsf{a}$ and $\mathsf{b}$ through $\ell$-th mirror-reflection link, respectively. Following the procedures in Sec. \ref{sec:Synchronization} gives the locations of antennas $\mathsf{a}$ and $\mathsf{b}$ on the virtual TV $\ell$, denoted by $\mathbf{x}_{\mathsf{a}}^{(\ell)}$ and $\mathbf{x}_{\mathsf{b}}^{(\ell)}$, which are different from the real locations $\mathbf{x}_{\mathsf{a}}$ and $\mathbf{x}_{\mathsf{b}}$. As in the case with an LoS path, the points $\mathbf{x}_{\mathsf{a}}^{(\ell)}$ and $\mathbf{x}_{\mathsf{b}}^{(\ell)}$ are used to compensate the system clock difference, which enables virtual vehicle positioning in Sec. \ref{subsection:NLoSImage}. Besides, $\mathbf{x}_{\mathsf{a}}^{(\ell)}$ and $\mathbf{x}_{\mathsf{b}}^{(\ell)}$ help estimate the reflection surface elaborated in Sec. \ref{TVcombining}.


\begin{figure}[t]\vspace{-0.5cm}
	\centering
	\includegraphics[
	width=340pt]{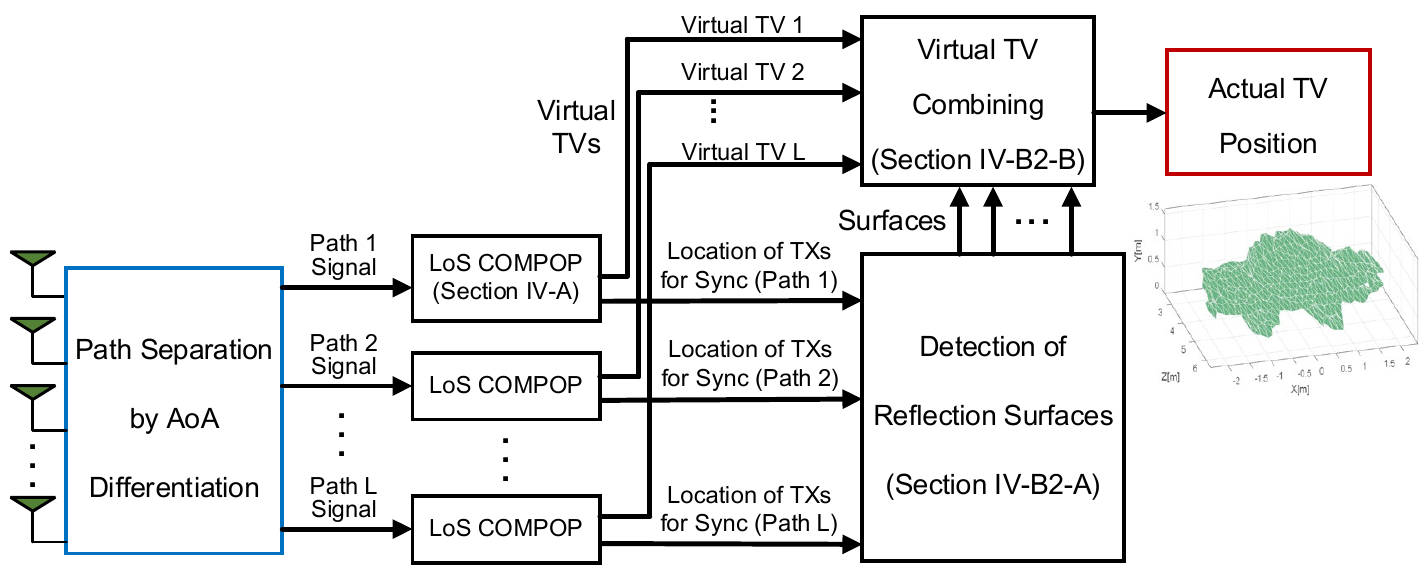}
	\caption{Diagram of COMPOP in NLoS.}\label{NLoS_Diag}\vspace{-0.5cm}
\end{figure}


\begin{remark}[TV Recognition in NLoS]
 Considering that multiple reflection links exist for the same TV in NLoS case, the detected system clock difference can help recognize the TVs. Specifically, signature waveforms transmitted from the same TV share the same system clock difference $\sigma$ regardless of different signal paths, since they are originated from the same TV.  Therefore, the SV can recognize signals from the same TV or different TVs according to the detected system clock difference $\tilde \sigma$.
\end{remark}
\subsubsection{Virtual TV Positioning}\label{subsection:NLoSImage}
Recall that plugging the perfectly estimated $\sigma$ into the demodulated signal \eqref{demodulated} gives the following  3D surface integral form of $y_m^{\ell,k}$ as
\begin{align}\label{NLosImageSig}
	{y}_m^{\ell,k} = {{\mathsf{y}}}^{(\ell)}(\mathbf{p}_m, f_k) &
	=  \Gamma^{(\ell)}\int_{\mathbb{R}^3} {\mathsf{I}_{\{\mathbf{x}^{(\ell)} \in \mathcal{X}^{(\ell)}\}}{\exp\left( - j\frac{2\pi f_k}{c} {D^{(\ell)}{(\mathbf{x}, \mathbf{p}_{m})}}\right)}d{\mathbf{x}^{(\ell)}}}, 
\end{align}
where ${{\mathsf{y}}}^{(\ell)}(\mathbf{x}, f): \mathbb{R}^4\to\mathbb{R}$ is a continuous function, and ${D^{(\ell)}{(\mathbf{x}, \mathbf{p}_{m})}}$ represents the Euclidean distance from point $\mathbf{x}^{(\ell)}$ in $\mathcal{X}^{(\ell)}$, which is symmetric to point $\mathbf{x}$ in $\mathcal{X}$ w.r.t. the $\ell$-th reflection surface, to the location of the SV's antenna $m$ denoted by $\mathbf{p}_m=(p_m^{(x)}, p_m^{(y)}, p_m^{(z)})^T$. 
{ Compared with \eqref{imageSignal} in the LoS scenario,  
the reflection coefficient $\Gamma^{(\ell)}$ can be considered as a constant scaling factor, which does not affect the positioning procedure. } Following the same steps in Sec. \ref{sec:Imaging},  
the virtual TV $\ell$'s power spectrum $\Phi[\mathbf{x}^{(\ell)}]$ can be calculated. After the peak detection in \eqref{Mapping_Rule}, we can obtain the virtual locations of the transmit antennas $\mathcal{X}^{(\ell)}$, which is referred to as virtual TV $\ell$. The estimated virtual TVs are used to position the actual TV explained in Sec. \ref{TVcombining}.

\subsection{Virtual TV Combining}\label{sec2:Common-point}
\subsubsection{Overview}
This subsection aims at positioning the actual TV with multiple virtual TVs detected in Sec. \ref{subsection:NLoSImage} under the assumption that the reflection surfaces are vertical to the ground, which is common in practice.
Since the actual TV can be directly obtained by shifting the virtual TVs w.r.t. their corresponding reflection surfaces, the key step becomes the detection of the reflection surfaces. 
To this end, an algorithm to position the actual TV, consists of two steps: 1) detection of reflection surfaces; and 2) combining virtual TVs. The detailed procedures are illustrated in the following.


\begin{figure}[t]\vspace{-0.5cm}
	\centering
	\includegraphics[scale = 0.48]{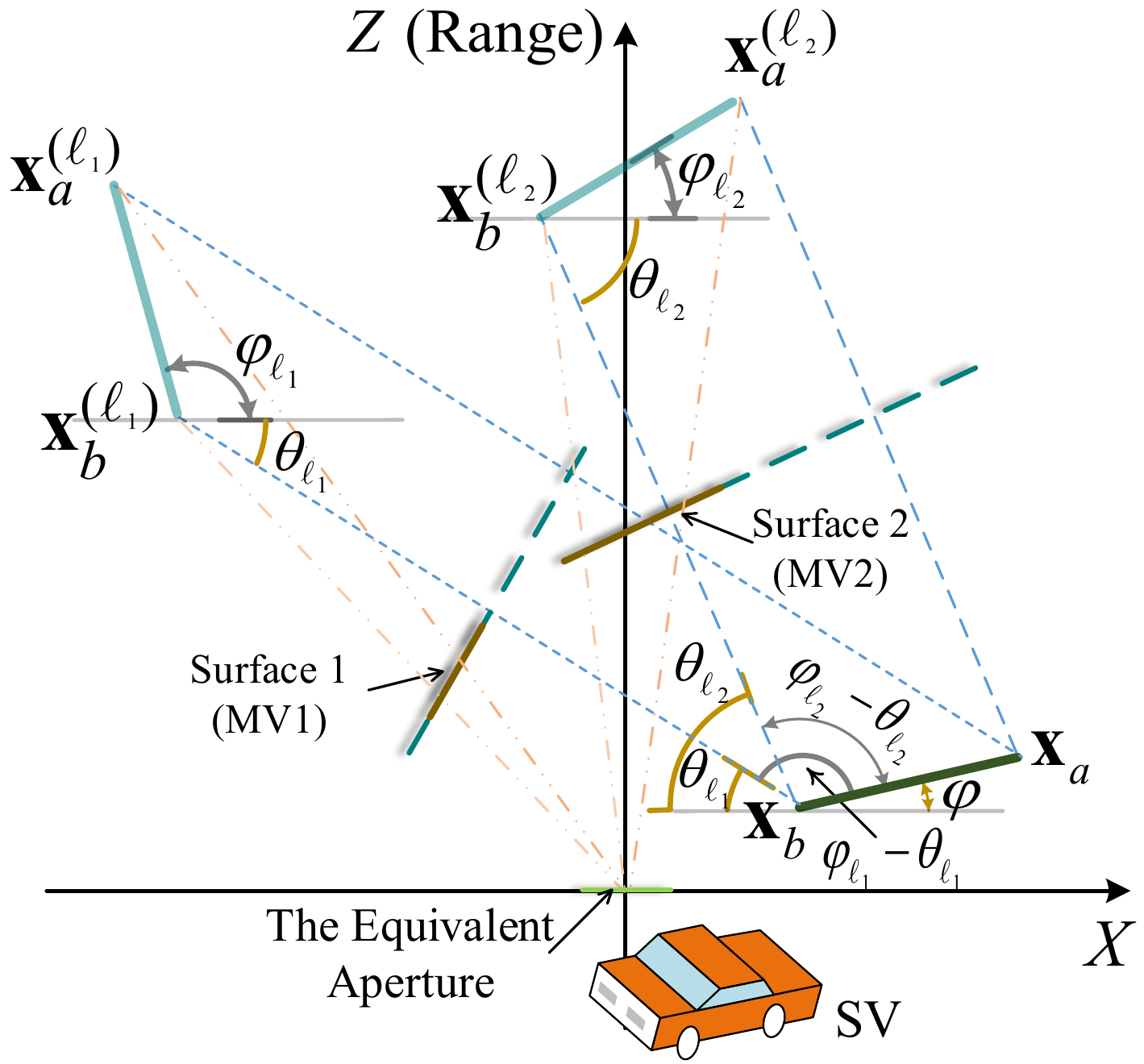}
	\caption{Geometric relations between the virtual and actual TVs from the top view.}\label{location}\vspace{-0.5cm}
\end{figure}

\subsubsection{Algorithm Description}\label{TVcombining}
\begin{itemize}
	\item \emph{A. Detection of Reflection Surfaces:} Consider the virtual locations of representative transmit antennas $\mathsf a$ and $\mathsf b$ detected in Sec. \ref{sec2:Syn} with coordinates $\mathbf{x}^{(\ell)}_{\mathsf{a}}=\left(x^{(\ell)}_{\mathsf{a}},y^{(\ell)}_{\mathsf{a}},z^{(\ell)}_{\mathsf{a}}\right)^T$ and $\mathbf{x}^{(\ell)}_{\mathsf{b}}=\left(x_{\mathsf{b}}^{(\ell)},y_b^{(\ell)},z_{\mathsf{b}}^{(\ell)}\right)^T$, which are symmetric to the counterpart points on the actual TV denoted by $\mathbf{x}_{\mathsf{a}}=\left(x_{\mathsf{a}},y_{\mathsf{a}},z_{\mathsf{a}}\right)^T$ and $\mathbf{x}_{\mathsf{b}}=\left(x_{\mathsf{b}},y_{\mathsf{b}},z_{\mathsf{b}}\right)^T$ w.r.t. the reflection surface $\ell$. It can be inferred from Fig. \ref{location} that given the locations $(\mathbf{x}^{(\ell)}_{\mathsf{a}},  \mathbf{x}^{(\ell)}_{\mathsf{b}})$, the reflection surfaces can be straightforwardly obtained if $(\mathbf{x}_{\mathsf{a}}, \mathbf{x}_{\mathsf{b}})$ is estimated. Therefore, the problem is translated into the detection of representative transmit antennas $(\mathbf{x}_{\mathsf{a}}, \mathbf{x}_{\mathsf{b}})$. To this end, we summarize the geometric relation between $(\mathbf{x}^{(\ell)}_{\mathsf{a}},  \mathbf{x}^{(\ell)}_{\mathsf{b}})$ and $(\mathbf{x}_{\mathsf{a}}, \mathbf{x}_{\mathsf{b}})$ in the following lemma.  
	
	\begin{lemma}\label{CommonPointRelation} {Consider virtual TVs $\ell_1$ and $\ell_2$ whose representative points are $\{\mathbf{x}_{\mathsf{a}}^{(\ell_1)}, \mathbf{x}_{\mathsf{b}}^{(\ell_1)}\}$ and $\{\mathbf{x}_{\mathsf{a}}^{(\ell_2)}, \mathbf{x}_{\mathsf{b}}^{(\ell_2)}\}$ respectively, 
			which have geometric relations with the counterpart points on the actual TV denoted by $\left(\mathbf{x}_{\mathsf{a}},\mathbf{x}_{\mathsf{b}}\right)$ as follows.
			\begin{enumerate}
				\item Let $\theta_{\ell}$ denote the directed angle from the $X$-axis (parallel to the ground) to the line between $\mathbf{x}_{\mathsf{a}}^{(\ell)}$ and $\mathbf{x}_{\mathsf{a}}$ or $\mathbf{x}_{\mathsf{b}}^{(\ell)}$ and $\mathbf{x}_{\mathsf{b}}$ (see Fig. \ref{location}). { According to the geometric relation illustrated in Fig. \ref{location}, $\mathbf{x}_{\mathsf{a}}$ can be given in terms of $\theta_{\ell_1}$ and $\theta_{\ell_2}$ as
				\begin{align}\label{representXA}
				\mathbf{x}_{\mathsf{a}}=\left(\begin{aligned}
				& x_{\mathsf{a}}\\
				& y_{\mathsf{a}}\\
				& z_{\mathsf{a}}
				\end{aligned}\right)=\left(
				\begin{aligned}
				&	\frac{{\left( {{z_{\mathsf{a}}^{(\ell_1)}} - {z_{\mathsf{a}}^{(\ell_2)}}} \right) + \left( {{x_{\mathsf{a}}^{(\ell_2)}}\tan {(\theta_{\ell_2})} - {x_{\mathsf{a}}^{(\ell_1)}}\tan {(\theta_{\ell_1})}} \right)}}{{\tan (\theta_{\ell_2}) - \tan (\theta_{\ell_1})}} \\
				&      \textrm{$y_{\mathsf{a}}^{(\ell_1)}$ or $y_{\mathsf{a}}^{(\ell_2)}$}  \\
				&	\textrm{${z_{\mathsf{a}}^{(\ell_1)}} + \tan \left({\theta_{\ell_1}}\right)\left( {x_{\mathsf{a}} - {x_{\mathsf{a}}^{(\ell_1)}}} \right)$}
				\end{aligned}
				\right), 
				\end{align}
				and  similarly the expression of $\mathbf{x}_{\mathsf{b}}$ in terms of $\theta_{\ell_1}$ and $\theta_{\ell_2}$ is obtained by replacing all $\mathsf{a}$ in \eqref{representXA} with $\mathsf{b}$. }
				\item Let $\phi_{\ell}$ denote the directed angle from the $X$-axis to the line segment of virtual TV $\ell$, from $\mathbf{x}^{(\ell)}_{\mathsf{a}}$ to $ \mathbf{x}^{(\ell)}_{\mathsf{b}}$, as shown in Fig. \ref{location}. The angles $\theta_{\ell}$ and $\phi_{\ell}$ of two virtual TVs $\ell_1$ and $\ell_2$ follows the relation
				\begin{align}\label{angleRelation}
				\theta_{\ell_1}-\theta_{\ell_2}=\frac{\phi_{\ell_1}-\phi_{\ell_2}}{2}. 
				\end{align}
			\end{enumerate}
		}
	\end{lemma}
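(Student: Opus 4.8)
The plan is to reduce everything to planar (top-view) geometry and then solve a small linear system for Part~1 and carry out an angle-chasing argument for Part~2. The crucial enabling fact is the standing assumption that every reflection surface is vertical to the ground: a mirror reflection across such a surface leaves the vertical coordinate $y$ unchanged and acts on the pair $(x,z)$ exactly as an ordinary planar reflection across a line $L_\ell$, namely the trace of the surface in the top view. I would first record the two elementary properties of a planar mirror reflection that drive both parts: (i) the segment joining any point to its image is orthogonal to $L_\ell$, so all such segments (for antenna $\mathsf{a}$ and for antenna $\mathsf{b}$) are mutually parallel and share one well-defined angle $\theta_\ell$ with the $X$-axis, which is exactly why the definition of $\theta_\ell$ can refer to either the $\mathsf{a}$- or the $\mathsf{b}$-line; and (ii) reflection preserves $y$. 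Property~(ii) immediately gives the middle component $y_{\mathsf{a}}=y_{\mathsf{a}}^{(\ell_1)}=y_{\mathsf{a}}^{(\ell_2)}$ in \eqref{representXA}.

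For Part~1, the definition of $\theta_\ell$ says that $\mathbf{x}_{\mathsf{a}}$ lies on the line through the known virtual point $\mathbf{x}_{\mathsf{a}}^{(\ell)}$ at angle $\theta_\ell$, so in the top view
\begin{align}
z_{\mathsf{a}}-z_{\mathsf{a}}^{(\ell)}=\tan(\theta_\ell)\,\bigl(x_{\mathsf{a}}-x_{\mathsf{a}}^{(\ell)}\bigr)
\end{align}
for each $\ell$. Writing this once for $\ell_1$ and once for $\ell_2$ produces a $2\times 2$ linear system in $(x_{\mathsf{a}},z_{\mathsf{a}})$; eliminating $z_{\mathsf{a}}$ yields the closed form for $x_{\mathsf{a}}$ in \eqref{representXA}, and back-substitution into the $\ell_1$-equation gives the stated $z_{\mathsf{a}}$ expression. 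Replacing $\mathsf{a}$ by $\mathsf{b}$ throughout repeats verbatim and delivers $\mathbf{x}_{\mathsf{b}}$. This part is essentially bookkeeping-free once the lines are identified; its only requirement is the non-degeneracy $\tan(\theta_{\ell_1})\neq\tan(\theta_{\ell_2})$, i.e.\ the two mirrors are not parallel, which is precisely what makes the two lines meet in a unique actual point.

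For Part~2, I would introduce $\alpha_\ell$, the angle that the mirror line $L_\ell$ makes with the $X$-axis; since $\theta_\ell$ is by property~(i) the normal direction, $\theta_\ell=\alpha_\ell+\tfrac{\pi}{2}\ (\mathrm{mod}\ \pi)$. The actual segment from $\mathbf{x}_{\mathsf{a}}$ to $\mathbf{x}_{\mathsf{b}}$ has a single fixed orientation $\phi_0$ independent of $\ell$, because there is only one physical TV. The standard fact that a planar reflection across a line at angle $\alpha_\ell$ sends a direction $\phi_0$ to $2\alpha_\ell-\phi_0$ then gives $\phi_\ell=2\alpha_\ell-\phi_0$ for the virtual-segment direction. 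Differencing over $\ell_1$ and $\ell_2$ cancels both $\phi_0$ and the constant $\tfrac{\pi}{2}$ offset, leaving $\phi_{\ell_1}-\phi_{\ell_2}=2(\alpha_{\ell_1}-\alpha_{\ell_2})=2(\theta_{\ell_1}-\theta_{\ell_2})$, which is \eqref{angleRelation}.

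The main obstacle I anticipate is the angle bookkeeping in Part~2: directed angles are defined only modulo $\pi$ for lines and modulo $2\pi$ for the oriented segment $\mathbf{x}_{\mathsf{a}}\to\mathbf{x}_{\mathsf{b}}$, so I would fix orientation conventions at the outset and check that the $\phi_0$ and $\tfrac{\pi}{2}$ terms genuinely cancel in the difference rather than merely up to an additive multiple of $\pi$. I would also verify that the reflection-of-a-direction rule $\phi\mapsto 2\alpha_\ell-\phi$ is applied with the image segment oriented consistently from $\mathbf{x}_{\mathsf{a}}^{(\ell)}$ to $\mathbf{x}_{\mathsf{b}}^{(\ell)}$, matching the definition of $\phi_\ell$, since a reflection reverses orientation and a careless choice would introduce a spurious sign. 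Part~1, by contrast, is a routine consequence of intersecting two non-parallel lines.
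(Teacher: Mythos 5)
Your proposal is correct and follows essentially the same route as the paper's Appendix C: Part 1 is obtained by intersecting the two lines $z = z_{\mathsf{a}}^{(\ell_i)} + \tan(\theta_{\ell_i})(x - x_{\mathsf{a}}^{(\ell_i)})$, and Part 2 follows from the mirror-symmetry relation between the segment orientations, which the paper simply asserts as $\pi-\varphi = \varphi_{\ell_1}-2\theta_{\ell_1} = \varphi_{\ell_2}-2\theta_{\ell_2}$ while you derive it explicitly from the reflection-of-a-direction rule. Your added care about the non-degeneracy condition $\tan(\theta_{\ell_1})\neq\tan(\theta_{\ell_2})$ and the modulo-$\pi$ angle conventions is a welcome tightening of an argument the paper leaves implicit.
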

	\begin{proof}
		Please refer to Appendix  C. 
	\end{proof}
	

	Based on Lemma \ref{CommonPointRelation}, the representative points $\left(\mathbf{x}_{\mathsf{a}},\mathbf{x}_{\mathsf{b}}\right)$ are estimated as follows.
	First, all angles $\{\theta_{\ell}\}_{\ell=2}^L$ can be expressed in terms of $\theta_{1}$ as $\theta_{\ell}=\theta_{1}+\frac{\phi_{\ell}-\phi_{1}}{2}$ using 
 \eqref{angleRelation}. Next, plugging each pair of $(\theta_{1}, \theta_{\ell})$ into \eqref{representXA} makes it possible to express the locations $\mathsf{a}$ and $\mathsf{b}$ w.r.t $\theta_1$, denoted by $({{\mathbf{v}}}_{\mathsf{a}}^{(\ell)}(\theta_1), {\mathbf{v}}^{(\ell)}_{\mathsf{b}}(\theta_1))$. If $\theta_{1}$ is correct, $({{\mathbf{v}}}_{\mathsf{a}}^{(\ell)}(\theta_1), {\mathbf{v}}^{(\ell)}_{\mathsf{b}}(\theta_1))$ naturally coincides with $\left(\mathbf{x}_{\mathsf{a}},\mathbf{x}_{\mathsf{b}}\right)$. In other words, estimating $\left(\mathbf{x}_{\mathsf{a}},\mathbf{x}_{\mathsf{b}}\right)$ is translated into finding $\theta_{1}$ minimizing the following squared Euclidean distance as
	\begin{align}\label{OptTheta}
	\theta_{1}^*=\arg\min_{\theta_{1}}\sum\limits_{\ell_1=2}^{L}\sum\limits_{\ell_2=2}^{L} {\left(\left\| {\mathbf{v}}_{\mathsf{a}}^{(\ell_1)}(\theta_1) - {\mathbf{v}}_{\mathsf{a}}^{(\ell_2)} (\theta_1)\right\|_2+\left\| {\mathbf{v}}_{\mathsf{b}}^{(\ell_1)}(\theta_1) - {\mathbf{v}}_{\mathsf{b}}^{(\ell_2)}(\theta_1) \right\|_2\right) }.
	\end{align}
	The optimal $\theta_{1}^*$ is computed by 1D search over $ \left[ { - {\pi },{\pi }} \right]$, and the resultant ${\mathbf{x}}_{\mathsf{a}}^*=\frac{1}{L-1}\sum_{\ell=2}^L{\mathbf{v}}_{\mathsf{a}}^{(\ell)}(\theta_{1}^*)$ and 
	${\mathbf{x}}_{\mathsf{b}}^*=\frac{1}{L-1}\sum_{\ell=2}^L{\mathbf{v}}_{\mathsf{b}}^{(\ell)}(\theta_{1}^*)$ can be directly obtained by using the optimal $\theta_{1}^*$. 
	Then with $({\mathbf{x}}_{\mathsf{a}}^*, {\mathbf{x}}_{\mathsf{b}}^*)$, the reflection surface $\ell$, which is located on the middle of ${\mathbf{x}}_{\mathsf{a}}^*$ and ${\mathbf{x}}_{\mathsf{a}}^{(\ell)}$, can be expressed by a line because it is perpendicular to $X-Z$ plane such that
	\begin{align}\label{reflectionPlane}
	z = - \frac{1}{{\tan ({\theta_{\ell}^*})}}\big(x- \frac{{x_{\mathsf{a}}^{(\ell)}} + {x_{\mathsf{a}}^*}}{2} \big)+\frac{\big( {{z_{\mathsf{a}}^{(\ell)}}+{z_{\mathsf{a}}^*}} \big)}{2},
	\end{align}
	where $\mathbf{x}_{\mathsf{a}}^*={({x}_{\mathsf{a}}^*, {y}_{\mathsf{a}}^*, {z}_{\mathsf{a}}^*)}^T$, and $\theta_{\ell}^*=\theta_{1}^*+\frac{\phi_{\ell}-\phi_{1}}{2}$. 
	
	\begin{prop}[Feasibility Condition for NLoS Position Combining]\label{Prop2}\emph{To detect the actual TV in NLoS, at least three reflection surfaces are required: $L\geq3$.}
	\end{prop}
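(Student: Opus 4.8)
The plan is to prove this feasibility condition by a degrees-of-freedom argument: after invoking Lemma \ref{CommonPointRelation} the entire virtual-TV-combining problem collapses to determining a single scalar unknown $\theta_1$, and the consistency constraint needed to pin down $\theta_1$ becomes non-degenerate only once $L\ge3$.

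First I would reduce the geometry to the $X$–$Z$ plane. Since each reflection surface is vertical to the ground, the mirror reflection preserves the $Y$-coordinate, so $y_{\mathsf{a}}^{(\ell)}=y_{\mathsf{a}}$ for every $\ell$ (exactly the middle row of \eqref{representXA}), and the only nontrivial unknowns are the in-plane coordinates. The content of part (1) of Lemma \ref{CommonPointRelation} is that for each surface $\ell$ the actual antenna $\mathbf{x}_{\mathsf{a}}$ lies on the ray issuing from the measured virtual antenna $\mathbf{x}_{\mathsf{a}}^{(\ell)}$ in the direction $\theta_\ell$ normal to that surface, and likewise $\mathbf{x}_{\mathsf{b}}$ on the parallel ray from $\mathbf{x}_{\mathsf{b}}^{(\ell)}$ with the \emph{same} $\theta_\ell$. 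Part (2), relation \eqref{angleRelation}, then writes every $\theta_\ell=\theta_1+\tfrac{\phi_\ell-\phi_1}{2}$ in terms of the measured segment angles $\{\phi_\ell\}$ and the single free parameter $\theta_1$. Thus the combining problem carries exactly one unknown.

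Next I would count the constraints the measurements supply. Fixing $\theta_1$ fixes all ray directions, and intersecting the reference ray (surface $1$) with ray $\ell$ through the closed form \eqref{representXA} yields one candidate location $\mathbf{v}_{\mathsf{a}}^{(\ell)}(\theta_1)$ for each $\ell\ge2$; this intersection is well defined precisely when $\tan\theta_\ell\neq\tan\theta_1$, i.e. the two surfaces are not parallel. At the true $\theta_1$ all rays are concurrent, so all $\mathbf{v}_{\mathsf{a}}^{(\ell)}(\theta_1)$ coincide, which is exactly the quantity minimized in \eqref{OptTheta}. The key observation is that this objective can constrain $\theta_1$ only when it compares at least two distinct candidates: with $L=1$ there is no ray to intersect the reference; with $L=2$ there is the single candidate $\mathbf{v}_{\mathsf{a}}^{(2)}(\theta_1)$, so the double sum in \eqref{OptTheta} reduces to the term $\ell_1=\ell_2=2$ and vanishes identically for every $\theta_1$, leaving $\theta_1$ (and hence the surfaces) undetermined. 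Only at $L\ge3$ do we obtain two or more candidates and the non-degenerate consistency equation $\mathbf{v}_{\mathsf{a}}^{(2)}(\theta_1)=\mathbf{v}_{\mathsf{a}}^{(3)}(\theta_1)$, one scalar constraint against the one unknown $\theta_1$. Solving it by the $1$-D search over $[-\pi,\pi]$ recovers $\theta_1^*$, after which the surfaces \eqref{reflectionPlane} and the actual TV follow by reflecting the virtual TVs back; this gives $L\ge3$ as both necessary and sufficient.

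The step I expect to be the main obstacle is showing that the single concurrency constraint available at $L=3$ determines $\theta_1$ \emph{uniquely}, i.e. ruling out spurious admissible roots of $\mathbf{v}_{\mathsf{a}}^{(2)}(\theta_1)=\mathbf{v}_{\mathsf{a}}^{(3)}(\theta_1)$ over $[-\pi,\pi]$. I would attack this by tracking how each pairwise intersection point moves along the reference ray as $\theta_1$ varies—using that all rays rotate by a common increment under \eqref{angleRelation}—together with the genericity assumption that the three surfaces have pairwise-distinct orientations, so that no two rays are parallel and every intersection stays well defined. Establishing strict monotonicity of the mismatch $\|\mathbf{v}_{\mathsf{a}}^{(2)}(\theta_1)-\mathbf{v}_{\mathsf{a}}^{(3)}(\theta_1)\|$ near the true $\theta_1^*$ yields local uniqueness, while global uniqueness over the admissible range is what the paper ultimately certifies through the $1$-D search and the accompanying numerical experiments.
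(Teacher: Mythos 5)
Your proposal is correct and follows essentially the same route as the paper's Appendix D: reduce everything to the single unknown $\theta_1$ via Lemma \ref{CommonPointRelation}, note that each surface beyond the reference supplies one candidate intersection $\mathbf{v}_{\mathsf{a}}^{(\ell)}(\theta_1)$, and observe that a nontrivial consistency constraint on $\theta_1$ first appears at $L=3$ (two candidates to match), whereas $L=2$ leaves the matching objective identically zero. The uniqueness issue you flag for the $L=3$ root is a genuine caveat, but the paper leaves it equally unaddressed, relying on the $1$-D search and simulations just as you describe.
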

	\begin{proof}
		Please refer to Appendix D. 
	\end{proof}
	\item \emph{B. Combining Virtual TVs:} With the knowledge of the reflection surfaces, the actual TV $\mathcal{X}$ can be obtained as elaborated in the following proposition.
	{
	\begin{prop}[Position Combining]\label{Prop3} \emph{Consider the virtual TV $\ell$ represented by $\mathcal{X}^{(\ell)}$. Given $\theta_{\ell}^*$ and ${\mathbf{x}}_{\mathsf{a}}^*$ (or ${\mathbf{x}}_{\mathsf{b}}^*$), the actual TV $\mathcal{X}$ can be obtained by the following mapping function:
			\begin{align}\label{FindLocation}
			\mathcal{X}={\mathsf{G}(\mathcal{X}^{(\ell)})}
			\end{align}
			where
			\begin{align}
			\mathsf{G}(\mathbf{x}^{(\ell)})&=\left(
			x^*, y^*, z^* \right)^T\nonumber\\
			&=\left(\!\begin{aligned}
			& {x^{(\ell )}} \!+\! \frac{{\tan (\theta _\ell ^*)}}{1+{\tan^2 (\theta _\ell ^*)} }\!\bigg(\! {\frac{{x_{\mathsf{a}}^{(\ell )} \!+\! x_{\mathsf{a}}^*}}{{\tan (\theta _\ell ^*)}} \!+\! z_{\mathsf{a}}^{(\ell )} \!+\! z_{\mathsf{a}}^* \!-\! \frac{{2{x^{(\ell )}}}}{{\tan (\theta _\ell ^*)}} \!-\! 2{z^{(\ell)}}} \!\bigg)\\
			& y^{(\ell)}\\
			& {z^{(\ell )}} + \frac{{\tan^2 (\theta _\ell ^*)}}{1+{\tan^2 (\theta _\ell ^*)} }\bigg( {\frac{{x_{\mathsf{a}}^{(\ell )} + x_{\mathsf{a}}^*}}{{\tan (\theta _\ell ^*)}} + z_{\mathsf{a}}^{(\ell )} + z_{\mathsf{a}}^* - \frac{{2{x^{(\ell )}}}}{{\tan (\theta _\ell ^*)}} - 2{z^{(\ell )}}} \bigg)
			\end{aligned}\!\right), \ \mathbf{x}^{(\ell)} \!\in\! \mathbb{R}^3\nonumber
			\end{align}
		}
	\end{prop}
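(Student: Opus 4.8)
The plan is to recognize that this proposition simply inverts the specular-reflection geometry established in Sec. \ref{subsection:NLoSImage}: by construction, each virtual point $\mathbf{x}^{(\ell)}\in\mathcal{X}^{(\ell)}$ is the mirror image of the corresponding actual point $\mathbf{x}\in\mathcal{X}$ across the $\ell$-th reflection surface. Hence the actual point is recovered by reflecting the virtual point \emph{back} across that same surface, whose equation \eqref{reflectionPlane} is already in hand once $\theta_\ell^*$ and $\mathbf{x}_{\mathsf{a}}^*$ have been fixed in Step A. The entire content of the proof is therefore to carry out this single reflection in closed form and verify that it reproduces the stated map $\mathsf{G}$ componentwise; applying $\mathsf{G}$ to every point of $\mathcal{X}^{(\ell)}$ then yields \eqref{FindLocation}.

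First I would exploit the standing assumption that the reflection surface is vertical to the ground. Since the $Y$-axis is the vertical direction, such a surface contains the $Y$-direction, so the reflection factorizes: the $y$-coordinate is invariant, giving $y^*=y^{(\ell)}$ at once, while the $(x,z)$ components reflect across the line in the $X$--$Z$ plane described by \eqref{reflectionPlane}. Writing $t=\tan(\theta_\ell^*)$, I would recast \eqref{reflectionPlane} in implicit form $x+tz=x_0+tz_0$, where $\left(x_0,z_0\right)=\left(\tfrac{x_{\mathsf{a}}^{(\ell)}+x_{\mathsf{a}}^*}{2},\tfrac{z_{\mathsf{a}}^{(\ell)}+z_{\mathsf{a}}^*}{2}\right)$ is the midpoint through which the surface passes and whose normal direction is $\mathbf{n}=(1,t)$. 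Substituting into the standard planar reflection identity
\[
\left(x^*,z^*\right)=\left(x^{(\ell)},z^{(\ell)}\right)-\frac{2\big[(x^{(\ell)}-x_0)+t(z^{(\ell)}-z_0)\big]}{1+t^2}\,(1,t),
\]
and expanding $x_0,z_0$, the common increment collapses to $\tfrac{1}{1+t^2}\big[(x_{\mathsf{a}}^{(\ell)}+x_{\mathsf{a}}^*-2x^{(\ell)})+t(z_{\mathsf{a}}^{(\ell)}+z_{\mathsf{a}}^*-2z^{(\ell)})\big]$. Clearing the internal $1/t$ factors that appear in the bracket of $\mathsf{G}$ against its prefactor $\tfrac{t}{1+t^2}$ (respectively $\tfrac{t^2}{1+t^2}$) shows that this is exactly the increment carried by $x^*$ (respectively $z^*$) in the proposition, completing the match.

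The step I expect to demand the most care is not any deep argument but the bookkeeping. I would first confirm that \eqref{reflectionPlane} is genuinely the perpendicular bisector of the segment joining $\mathbf{x}_{\mathsf{a}}^{(\ell)}$ and $\mathbf{x}_{\mathsf{a}}^*$ --- this was arranged in Step A, where the surface was placed at their midpoint with slope $-1/t$ orthogonal to the connecting direction of angle $\theta_\ell^*$ --- so that reflecting any point across it indeed undoes the mirroring that produced the virtual TV. The remaining delicate point is tracking signs together with the interchange of $1/t$ and $t$ factors through the simplification, ensuring that the collapsed reflection formula matches the precise closed form of $\mathsf{G}$ rather than a superficially different but algebraically equivalent expression.
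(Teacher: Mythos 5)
Your proposal is correct and follows essentially the same route as the paper's Appendix E: both recover each actual point by reflecting the corresponding virtual point across the surface \eqref{reflectionPlane} (the perpendicular bisector of the segment from $\mathbf{x}_{\mathsf{a}}^{(\ell)}$ to $\mathbf{x}_{\mathsf{a}}^*$), with the $y$-coordinate left invariant. The only difference is presentational --- the paper parametrizes the displacement $\Delta x^{(\ell)}$ along the normal direction $(1,\tan\theta_\ell^*)$ and solves the midpoint-on-surface condition for it, whereas you invoke the standard reflection-across-a-line identity in implicit form; the resulting algebra and the closed form of $\mathsf{G}$ coincide.
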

	\begin{proof}
		Please refer to Appendix E. 
	\end{proof} }
\end{itemize}

{
\begin{remark}[Existence of LoS path]{The LoS case is a special realization of the NLoS case, where one couple of representative points $\left(\mathbf{x}_{\mathsf{a}}^{\text{LoS}},\mathbf{x}_{\mathsf{b}}^{\text{LoS}}\right)$ are equivalent to the exact location $\left(\mathbf{x}_{\mathsf{a}},\mathbf{x}_{\mathsf{b}}\right)$. Therefore, all mathematical expressions for the virtual TV combining still hold when the LoS link exists, and the resultant $\left(\mathbf{x}_{\mathsf{a}}^*,\mathbf{x}_{\mathsf{b}}^*\right)$ can be obtained in the same way. }	
\end{remark} }
{
\begin{remark}[Arbitrary Reflection Surfaces]\label{Vertical}
The assumption that all reflection surfaces are vertical to the ground can be easily relaxed with one more antenna transmitting signature waveform $s_\mathsf{c}$. In this case, representative points $\{\mathbf{x}_\mathsf{a}^{\ell}, \mathbf{x}_\mathsf{b}^{\ell}, \mathbf{x}_\mathsf{c}^{\ell} | \forall \ell\}$ are detected in the clock synchronization step. Analogous to the result in \eqref{reflectionPlane}, reflection surfaces can then be located, without the assumption, by exploiting the information of these representative points, so as the actual TV position. 
\end{remark}
}

\subsection{Propagation Loss Analysis}\label{Comparison_NLoS} 
Similar to the analysis in Sec. \ref{Comparison_LoS}, we also give signal power comparison between the proposed COMPOP and traditional RADAR techniques over the reflection links. Although RADAR techniques do not provide solutions to positioning without a LoS, 
we still provide the received signal power of a reflection link, say the $\ell$-th reflection link, as
\begin{align}\label{PowerLevel_NLoS_RADAR}
	P_{r}^{(c)}={{P_{t}G_{t}\lambda^2} \over {4^5\pi^5 (R_1^{(\ell)})^{4} (R_2^{(\ell)})^4}} \sigma ( \Theta_i ) \sigma^2 ( \Theta_i^{(\ell)}),
\end{align}
where $\Theta_i^{(\ell)}$ is the incident angle on the reflection surface $\ell$, $R_1^{(\ell)}$ is the distance from the SV to the reflection surface $\ell$, and $R_2^{(\ell)}$ is the distance from the TV to the reflection surface $\ell$.
For the proposed COMPOP, signals from each reflection link give one virtual TV detected with algorithm in Sec. \ref{subsection:NLoSImage}. For the reflection link $\ell$, the received signal power is 
\begin{align}\label{PowerLevel_NLoS}
	P_{r}^{(\ell)}={{P_{t}G_{t}\lambda^2} \over {64\pi^3 (R_1^{(\ell)})^{2} (R_2^{(\ell)})^2}}\sigma ( \Theta_i^{(\ell)}).
\end{align}
It is obvious that the proposed COMPOP achieves a much lower propagation loss than the conventional RADAR techniques.

\section{Simulation Results}

In this section, the performance of the proposed COMPOP technique is evaluated by realistic settings. 
Signature waveforms at $4$ frequencies are used for the synchronization procedure. 
The number of frequencies used in the SFCW \eqref{transmitSig} is $K=256$ ranging from $57$ GHz to $60$ GHz with the constant gap $\Delta = 11.72$ MHz. The four frequencies used in the signature waveforms  \eqref{Signature_Waveform} are $(57-\Delta \cdot i)$ GHz where $i=1,2,3,4$. 
The numbers of the TV's and SV's antennas are ${N_t} = {N_r} = 200$, which are uniformly deployed on the vehicles bodies. The size of the equivalent receive aperture is $1 \times 1$ $m^2$.  The SNR of each received signal is fixed to $10$ dB. The number of reflection surfaces are $3$ and the distance between TV and SV is $8$ m unless stated otherwise. 

For the performance metric, we use the Hausdorff distances defined as follows. 
\begin{defn}[Hausdorff distance~\cite{Hausdorff}] The Hausdorff distance is widely used to evaluate the image retrieval performance by measuring  the similarity between two images. Consider an image $\mathcal{A}$ and its retrieved one $\mathcal{B}$, both of which are composed of discrete points. The Hausdorff distance is defined as 
	\begin{align}\label{Hausdorff_Def}
	\mathsf{H}\left( {\mathcal{A},\mathcal{B}} \right) = \max \left( {\mathsf{h}\left( {\mathcal{A},\mathcal{B}} \right), \mathsf{h}\left( {\mathcal{B},\mathcal{A}} \right)} \right),
	\end{align} 
	where $\mathsf{h}\left( {\mathcal{A},\mathcal{B}} \right) = \mathop {\max }\limits_{a \in \mathcal{A}} \mathop {\min }\limits_{b \in \mathcal{B}} \left\| {a - b} \right\|$. 
\end{defn}

\subsection{Graphical Example of Multi-Point Positioning}

This subsection aims at explaining the entire vehicular positioning procedure with step-by-step graphical examples. To this end, 
we consider the topology with three reflection surfaces illustrated in Fig. \ref{InitialSetting}. 
The TV is represented by discrete points, each of which is  one TV's  antenna.  The size of the TV is $3\times1\times0.6$ $m^3$.
The equivalent receive aperture is parallel to $Y$-axis and located at $(0, 0, 0)$.
The equations of three reflection surfaces are given as $z = 1.02x + 3$, $z = \frac{{x + 13}}{4}$, and $z = 3x + 4$. 
\begin{figure}[t]\vspace{-0.5cm}
	\centering
	\includegraphics[scale = 0.35]{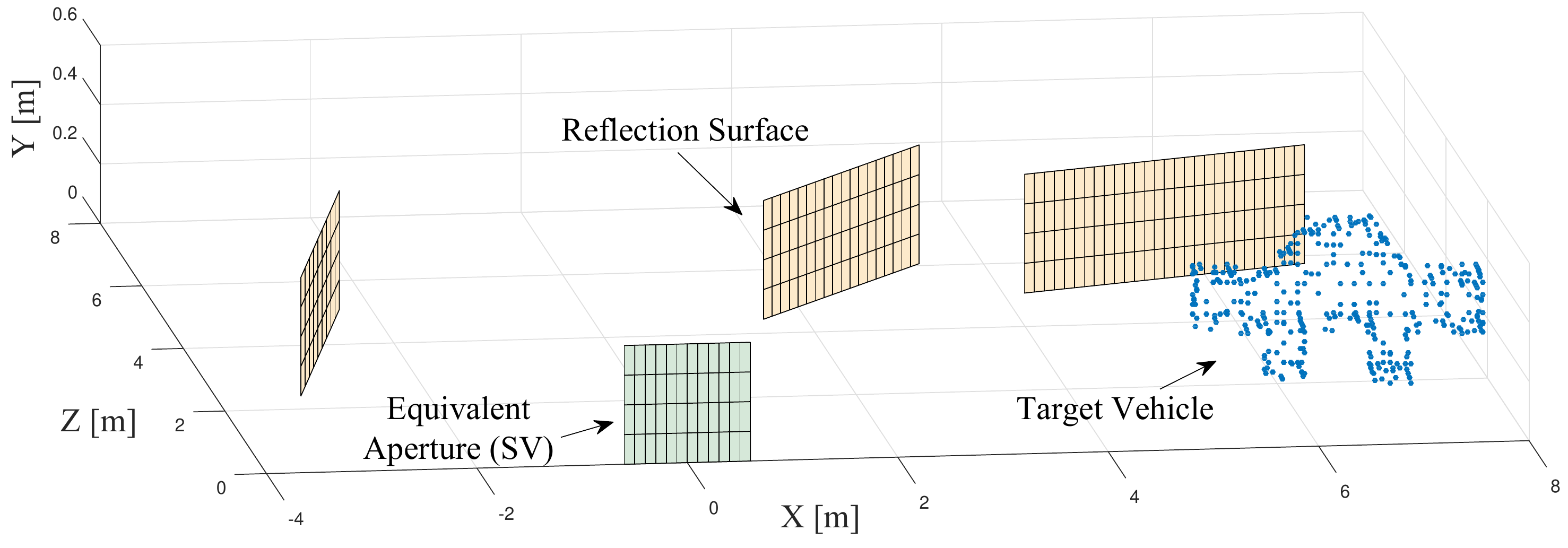}
	\caption{Initial topology.}\label{InitialSetting}\vspace{-0.5cm}
\end{figure}
\begin{figure}[t]
	\centering
	\includegraphics[scale = 0.35]{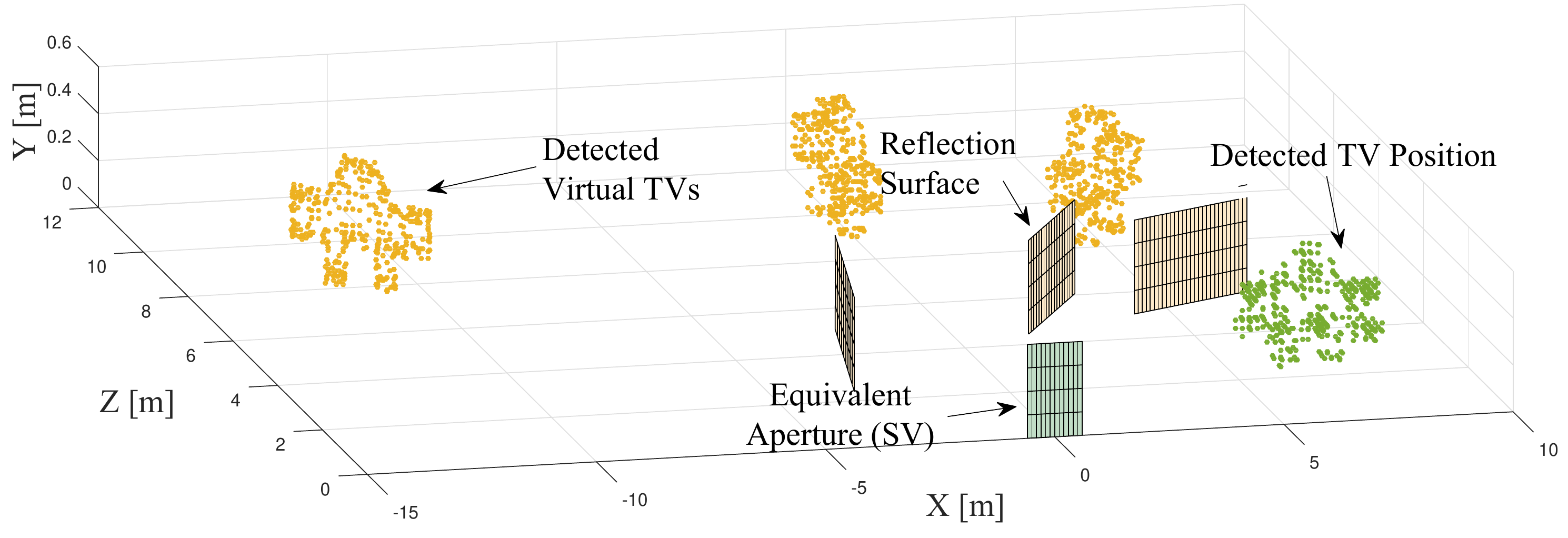}
	\caption{The detection of the virtual TVs.}\label{Reconstruct1st}\vspace{-0.5cm}
\end{figure}

Using the reflection surfaces, the SV detects three virtual TVs represented by yellow slots in Fig. \ref{Reconstruct1st}, each of which is  differentiable using AoA information. By the intelligent combining algorithm in Proposition \ref{Prop3}, each virtual TV can be shifted to its real location represented by green spots, of which the Hausdorff distance is $0.355$m, which is relatively small compared to the size of the TV. After graphical rendering process, the final detected position is obtained as in Fig. \ref{EnvelopCompare}(b) that is similar to the original one in Fig. \ref{EnvelopCompare}(a).



\begin{figure}[t]
	\begin{minipage}[t]{0.5\linewidth}
		\centering
		\includegraphics[scale = 0.35]{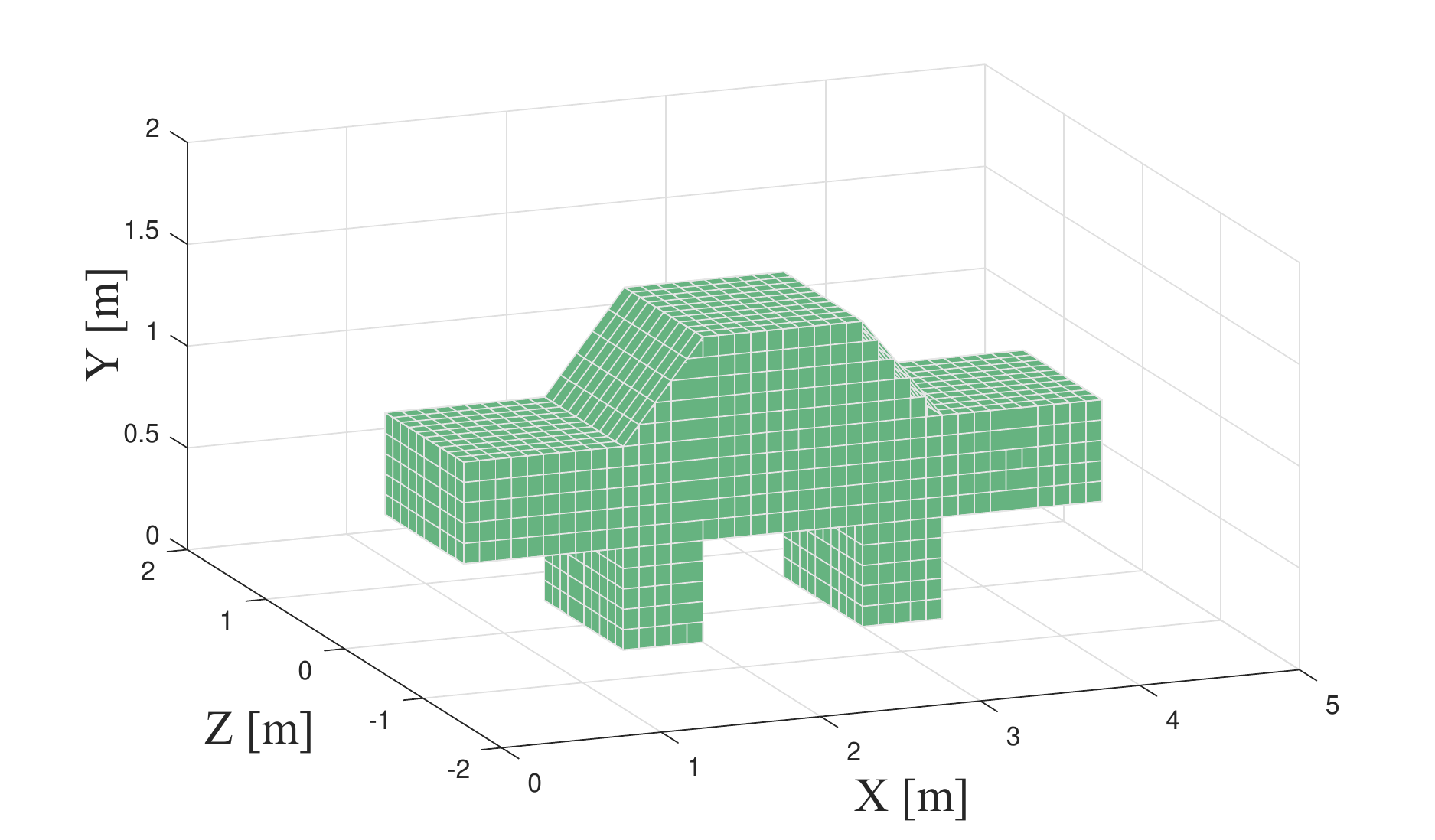}
		\caption*{(a) Envelop Diagram of TV.}
		\label{fig:side:a}
	\end{minipage}
\hfill
	\begin{minipage}[t]{0.5\linewidth}
		\centering
		\includegraphics[scale = 0.44]{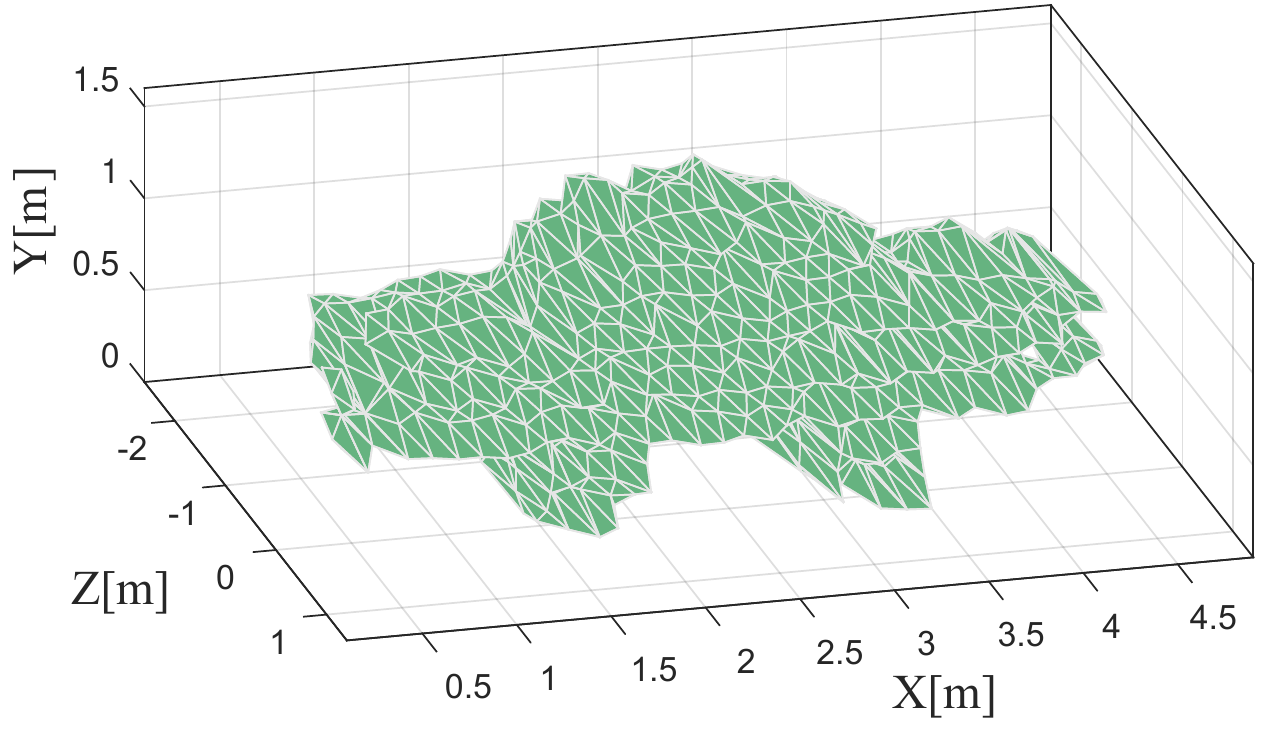}
		\caption*{(b) Envelop Diagram of the detected TV position.}
		\label{fig:side:b}
	\end{minipage}
	\caption{The envelop digaram comparison of the TV model and the detected position.}\label{EnvelopCompare}\vspace{-0.5cm}
\end{figure}
{
\subsection{Error in Clock Synchronization}
The clock difference detection error is omitted in procedures after clock synchronization according to the analysis in Proposition \ref{Prop:sync}. To justify such an assumption, we check it by simulations given in Fig. \ref{ClockSync}. It clearly shows that the clock difference detection error can be well suppressed by an appropriate number of antennas, e.g., $N_r = 16$ or $N_r = 64$, at the receiver. Therefore, it is reasonable to assume perfect clock synchronization for convenience.

\subsection{Different Interpolation Methods}
The performance by using different interpolation methods for resampling in Sec. \ref{sec:Imaging} is also checked by simulations in Fig. \ref{Interpolation}. It can be observed that all interpolation methods gives similar performance, where the errors are negligible for vehicular positioning. Therefore, we simply adopt linear interpolation in this work for low complexity.

\begin{figure}[h]
	\centering
	\includegraphics[
	width=450pt]{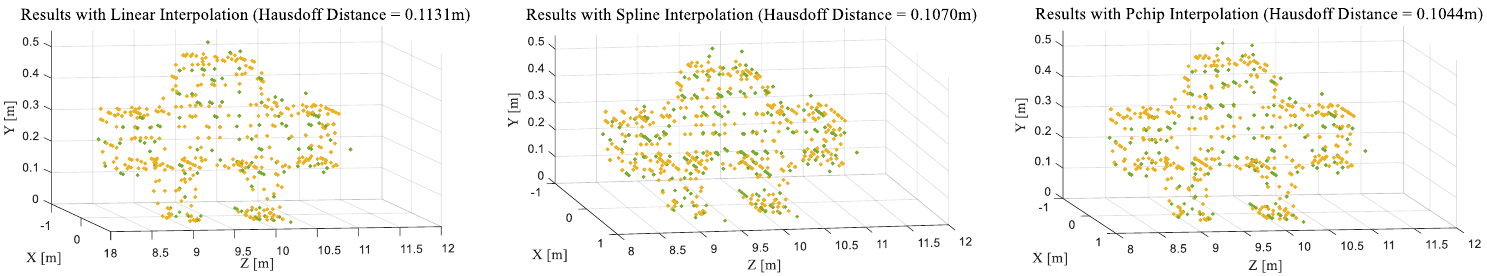}
	\caption{Comparison between different interpolation methods, where the initial TV model is in yellow, and the retrieved TV position by algorithm in Sec. \ref{sec:Imaging} is in green.}\label{Interpolation}\vspace{-0.5cm}
\end{figure}

\begin{figure}[t]
	\centering
	\includegraphics[
	width=200pt]{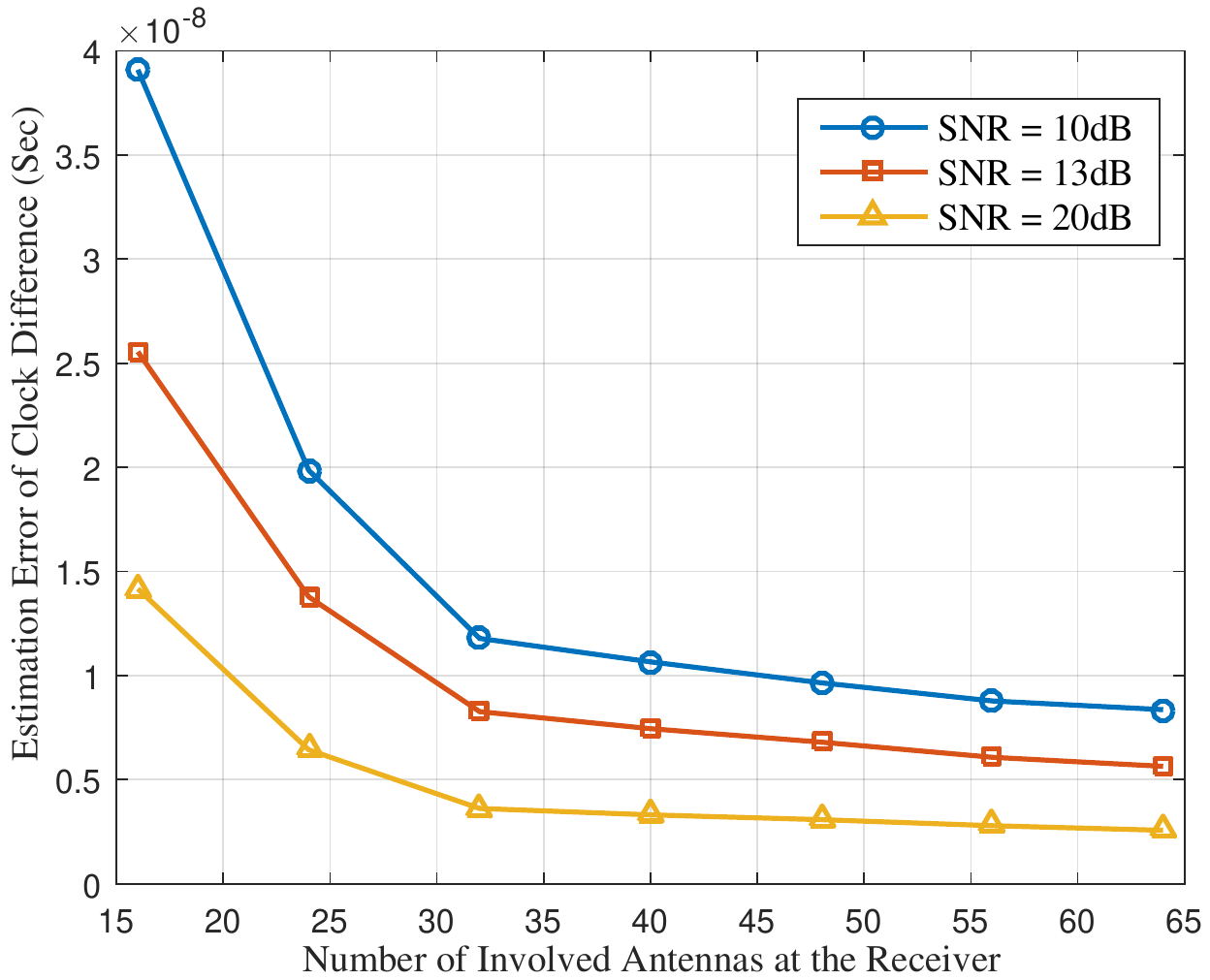}
	\caption{Estimation error of the clock difference versus the number of antennas involved at the receiver.}\label{ClockSync}\vspace{-0.5cm}
\end{figure} }

\subsection{Effect of Distance between SV and TV}
	In Fig.~\ref{Compare}, the Hausdorff distances are given under different TV-SV distances, showing that the positioning quality is degraded as the distance between the TV and SV increases.  This phenomenon can be explained by \eqref{spatialResolution}, where the spatial resolution becomes poor when the detection range $R$ is large. Therefore, the SV may not be able to capture the clear position of the TVs far away, especially when the number of reflection surfaces is small. Moreover, the relation between the Hausdorff distance and the TV-SV distance is not linear, which is led by the $\tan(\cdot)$ function in the solution \eqref{representXA}. With noise in consideration, the Hausdorff distance, indicating the comprehensive error level, increases faster when the TV-SV distance becomes larger.

\begin{figure}[t]
	\begin{minipage}[t]{0.5\linewidth}
		\centering
		\includegraphics[scale = 0.45]{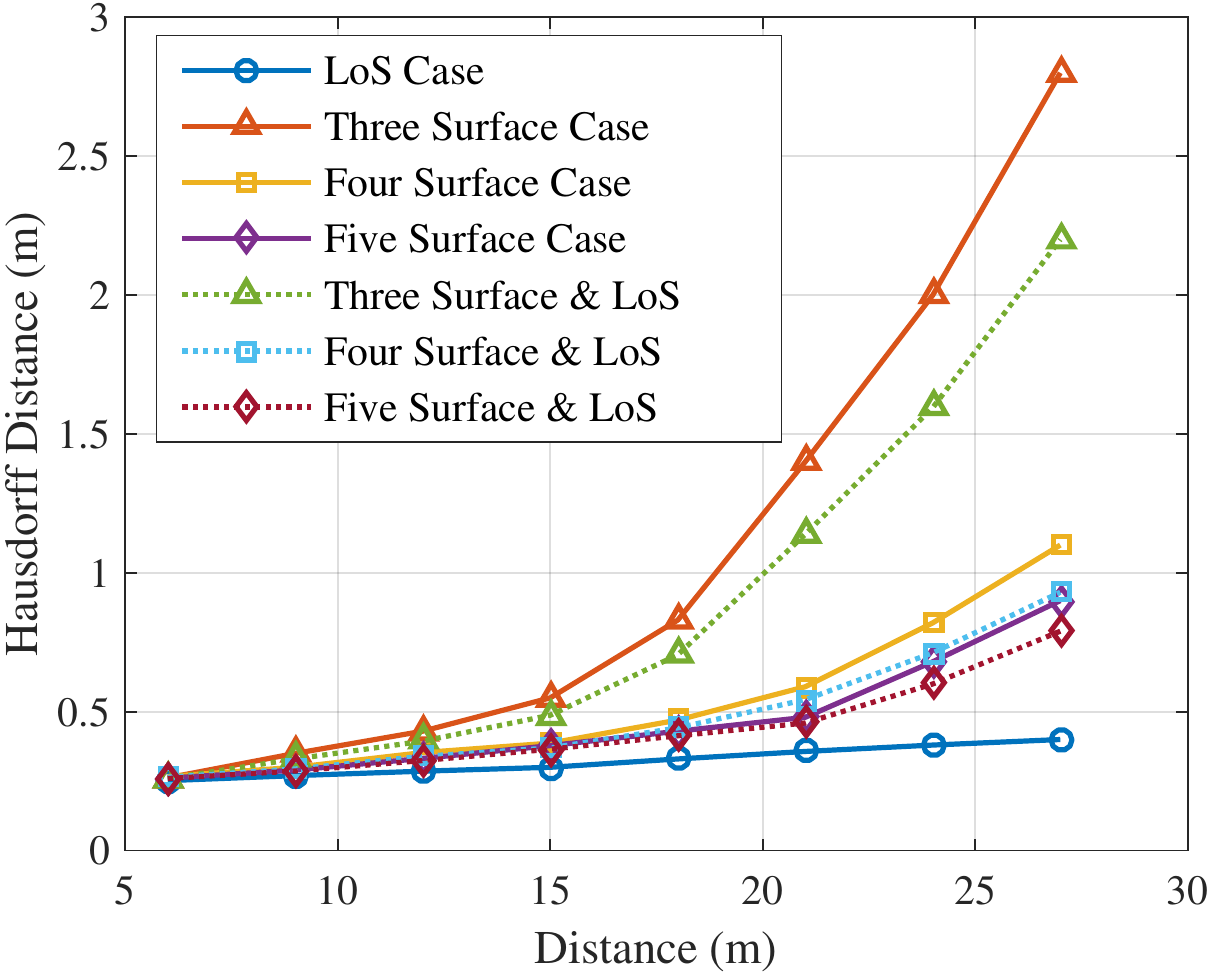}
		\caption*{(a) The  performance of the COMPOP with reflection links.}
	\end{minipage}
	\hfill
	\begin{minipage}[t]{0.5\linewidth}
		\centering
		\includegraphics[scale = 0.45]{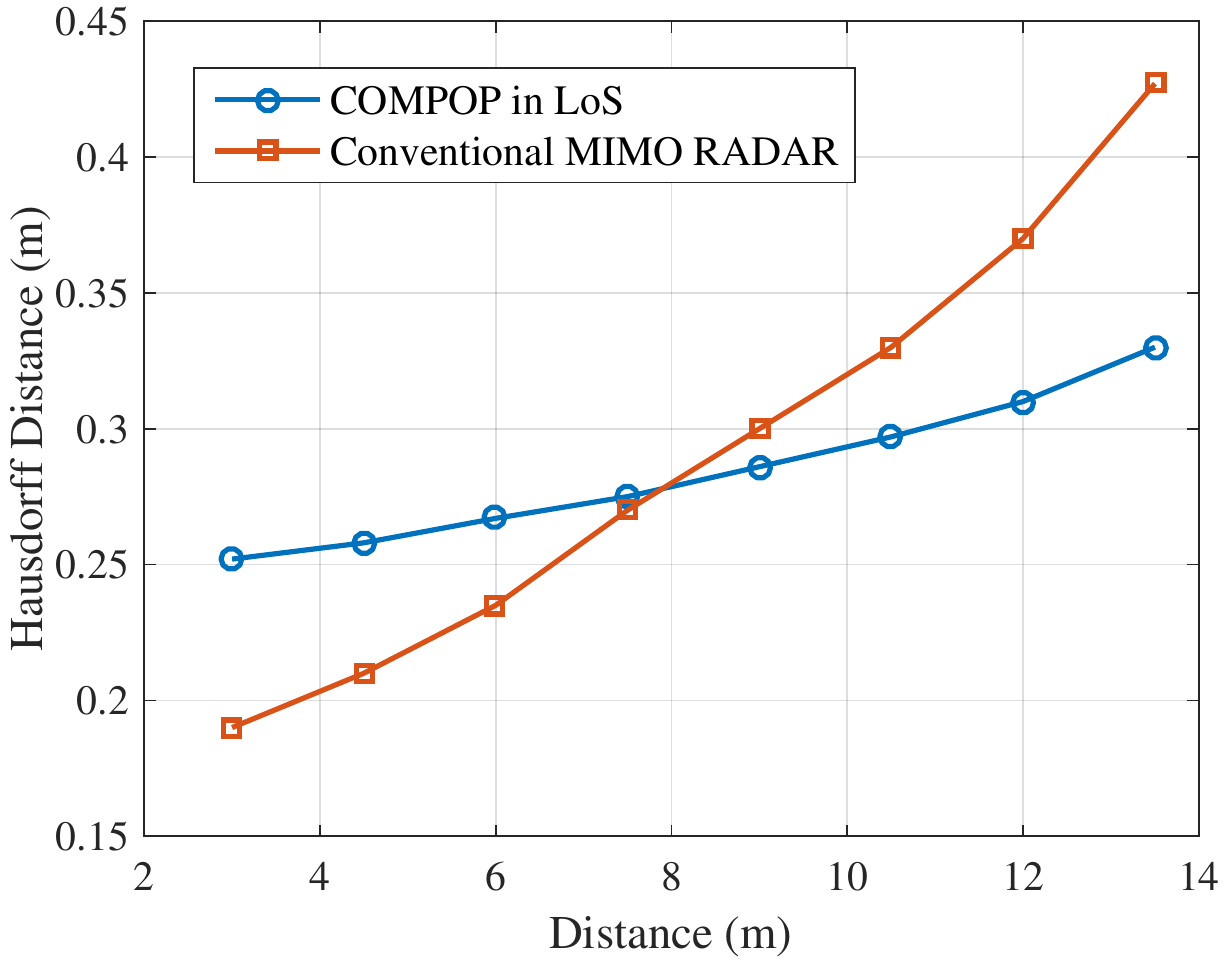}
		\caption*{(b) Comparison between the COMPOP and the existing MIMO RADAR.}
	\end{minipage}
	\caption{The performance of the proposed COMPOP versus distance.}\label{Compare}\vspace{-0.5cm}
\end{figure}

\subsection{Effect of Reflection Surface Number}
	The relation between the performance and the number of reflection surfaces is also presented in Fig.~\ref{Compare} (a). As explained in Sec. \ref{sec2:Common-point}, signals reflected from any three reflection surfaces give one estimation of $\theta_1$, but the resultant positioning quality is low due to the phase error, and the performance becomes unstable 
    when the TV-SV distance increases. We also consider the cases where more than $3$ reflection links exist $(L>3)$. Larger $L$ provides more combinations to estimate $\theta_1$, resulting in more accurate estimation of $\theta_1$ by canceling out individual estimated error. 
    Moreover, the LoS case leads to the minimum Hausdorff distance. It is considered as a lower bound because the operation in Sec. \ref{sec2:Common-point} is not involved, and the estimation error only comes from the approach in Sec. \ref{sec:Imaging}. The case with mixed LoS and NLoS paths are also investigated here, which is plotted by dotted lines. It can be observed that although the existence of LoS path enhances the performance, the errors from NLoS paths hamper the accuracy. 
 
{  
\subsection{Comparison to MIMO RADAR}
   The performance of conventional MIMO RADAR~\cite{BackPropagation,MIMO1} is also checked by simulations as shown in Fig.~\ref{Compare} (b), which is considered as a baseline of the proposed COMPOP design. The conventional MIMO RADAR is naturally synchronized with co-located transceivers and thus free from the clock difference detection error. Hence, conventional MIMO RADAR could be a better choice in short distance. However, when the distance becomes larger, the proposed COMPOP has a signal power gain and thus shows better performance as illustrated in Sec. \ref{Comparison_LoS}. }


	\section{Conclusion}
	In this paper, a novel \emph{Coopertive Multi-point Positioning} (COMPOP) approach via mmWave signal transmissions has been proposed to capture the shape and location information of the TVs in both LoS and NLoS. The cooperative transmission between the TV and SV enables the real-time COMPOP without scanning process. The synchronization issue due to transceiver separation has been well addressed by a PDoA-based  positioning approach. In NLoS case, COMPOP establishes mirror-reflection links between the TV and SV under the assistance of the nearby vehicles. { The geometric relation between the virtual and actual TVs has been exploited to position the actual TV via an intelligent combining algorithm without priori knowledge on the nearby vehicles.}
	In conclusion, the proposed COMPOP is available in both LoS and NLoS situations with ultra-low latency and high detection accuracy, which is challenging for existing vehicular sensing techniques (e.g., RADAR and LIDAR). Therefore, this technique opens a new area of mmWave-based vehicular positioning and sensing. We believe the proposed COMPOP contributes to more intelligent and safer autonomous driving, and the potential of mmWave-based vehicular sensing can still be activated in the future.

	\appendix
{
	\subsection{Proof of proposition \ref{Prop:sync}}\label{A}
    Due to the assumption of Gaussian phase error, the covariance of the location estimation can be expressed as
    \begin{align}\label{Ap_Cov}
    {\rm{cov}}({\bf{\tilde x}}_{\mathsf a}) 
    =& \bf X {\sigma _z^2{{\bf{I}}_3}}, 
    \end{align}
    where $\bf X$ is a $3$-by-$3$ matrix as 
    \begin{align}\label{AP_EXP}
    \bf X&= {\left( {{\bf{G}}{{({\bf{\tilde x}}_{\mathsf {a}})}^T}{\bf{G}}({\bf{\tilde x}}_{\mathsf {a}})} \right)^{ - 1}}\nonumber\\
    &=
    \left(\sum_{m=2}^{N_r} \left[ {\begin{array}{*{20}{c}}
	{{{{\left( {\frac{{\partial {\mathsf{F_m}}({\bf{\tilde x}}_{\mathsf {a}})}}{{\partial x_{\mathsf {a}}}}} \right)}^2}} }&{ {\frac{{\partial {\mathsf{F_m}}({\bf{\tilde x}}_{\mathsf {a}})}}{{\partial x_{\mathsf {a}}}}\frac{{\partial {\mathsf{F_m}}({\bf{\tilde x}}_{\mathsf {a}})}}{{\partial y_{\mathsf {a}}}}} }&{ {\frac{{\partial {\mathsf{F_m}}({\bf{\tilde x}}_{\mathsf {a}})}}{{\partial x_{\mathsf {a}}}}\frac{{\partial {\mathsf{F_m}}({\bf{\tilde x}}_{\mathsf {a}})}}{{\partial z_{\mathsf {a}}}}} }\\
	{ {\frac{{\partial {\mathsf{F_m}}({\bf{\tilde x}}_{\mathsf {a}})}}{{\partial x_{\mathsf {a}}}}\frac{{\partial {\mathsf{F_m}}({\bf{\tilde x}}_{\mathsf {a}})}}{{\partial y_{\mathsf {a}}}}} }&{ {{{\left( {\frac{{\partial {\mathsf{F_m}}({\bf{\tilde x}}_{\mathsf {a}})}}{{\partial y_{\mathsf {a}}}}} \right)}^2}} }&{ {\frac{{\partial {\mathsf{F_m}}({\bf{\tilde x}}_{\mathsf {a}})}}{{\partial y_{\mathsf {a}}}}\frac{{\partial {\mathsf{F_m}}({\bf{\tilde x}}_{\mathsf {a}})}}{{\partial z_{\mathsf {a}}}}} }\\
	{ {\frac{{\partial {\mathsf{F_m}}({\bf{\tilde x}}_{\mathsf {a}})}}{{\partial x_{\mathsf {a}}}}\frac{{\partial {\mathsf{F_m}}({\bf{\tilde x}}_{\mathsf {a}})}}{{\partial z_{\mathsf {a}}}}} }&{ {\frac{{\partial {\mathsf{F_m}}({\bf{\tilde x}}_{\mathsf {a}})}}{{\partial y_{\mathsf {a}}}}\frac{{\partial {\mathsf{F_m}}({\bf{\tilde x}}_{\mathsf {a}})}}{{\partial z_{\mathsf {a}}}}} }&{ {{{\left( {\frac{{\partial {\mathsf{F_m}}({\bf{\tilde x}}_{\mathsf {a}})}}{{\partial z_{\mathsf {a}}}}} \right)}^2}} }
	\end{array}} \right]\right)^{ - 1}\nonumber\\
    &=\frac{1}{{N_r}-1}\left(\frac{1}{{N_r}-1}\sum_{m=2}^{N_r} \left[ {\begin{array}{*{20}{c}}
	{{{{\left( {\frac{{\partial {\mathsf{F_m}}({\bf{\tilde x}}_{\mathsf {a}})}}{{\partial x_{\mathsf {a}}}}} \right)}^2}} }&{ {\frac{{\partial {\mathsf{F_m}}({\bf{\tilde x}}_{\mathsf {a}})}}{{\partial x_{\mathsf {a}}}}\frac{{\partial {\mathsf{F_m}}({\bf{\tilde x}}_{\mathsf {a}})}}{{\partial y_{\mathsf {a}}}}} }&{ {\frac{{\partial {\mathsf{F_m}}({\bf{\tilde x}}_{\mathsf {a}})}}{{\partial x_{\mathsf {a}}}}\frac{{\partial {\mathsf{F_m}}({\bf{\tilde x}}_{\mathsf {a}})}}{{\partial z_{\mathsf {a}}}}} }\\
	{ {\frac{{\partial {\mathsf{F_m}}({\bf{\tilde x}}_{\mathsf {a}})}}{{\partial x_{\mathsf {a}}}}\frac{{\partial {\mathsf{F_m}}({\bf{\tilde x}}_{\mathsf {a}})}}{{\partial y_{\mathsf {a}}}}} }&{ {{{\left( {\frac{{\partial {\mathsf{F_m}}({\bf{\tilde x}}_{\mathsf {a}})}}{{\partial y_{\mathsf {a}}}}} \right)}^2}} }&{ {\frac{{\partial {\mathsf{F_m}}({\bf{\tilde x}}_{\mathsf {a}})}}{{\partial y_{\mathsf {a}}}}\frac{{\partial {\mathsf{F_m}}({\bf{\tilde x}}_{\mathsf {a}})}}{{\partial z_{\mathsf {a}}}}} }\\
	{ {\frac{{\partial {\mathsf{F_m}}({\bf{\tilde x}}_{\mathsf {a}})}}{{\partial x_{\mathsf {a}}}}\frac{{\partial {\mathsf{F_m}}({\bf{\tilde x}}_{\mathsf {a}})}}{{\partial z_{\mathsf {a}}}}} }&{ {\frac{{\partial {\mathsf{F_m}}({\bf{\tilde x}}_{\mathsf {a}})}}{{\partial y_{\mathsf {a}}}}\frac{{\partial {\mathsf{F_m}}({\bf{\tilde x}}_{\mathsf {a}})}}{{\partial z_{\mathsf {a}}}}} }&{ {{{\left( {\frac{{\partial {\mathsf{F_m}}({\bf{\tilde x}}_{\mathsf {a}})}}{{\partial z_{\mathsf {a}}}}} \right)}^2}} }
	\end{array}} \right]\right)^{ - 1},
    \end{align}
    where the inverse matrix's each component converges to its expectation as $N_r$ becomes large, which becomes independent to $N_r$. In other words, $\bf X$ is proportional to $\frac{1}{N_r-1}$. }

	\subsection{Proof of Lemma \ref{lemma1}}\label{B}
	We adopt the scalar diffraction idea in~\cite{AngularSpectrum} that the wave field can be decomposed as an infinite integral of planar waves, given as
	\begin{align}\label{decomposition}
	{\exp\!\left(\! - j\frac{2\pi f}{c}D{(\mathbf{x}, \mathbf{p}_{m})} \!\right)} \!=\! \iint_{f^{(z)} = \sqrt{{{f}^2} - \left(f^{(x)}\right)^2 - \left(f^{(y)}\right)^2}}\exp\left(\!-j\frac{2\pi}{c} \mathbf{f}^T (\mathbf{x}-\mathbf{p}_m) \! \right) \!d{{f}^{(x)}}\!d{{f}^{(y)}},
	\end{align}	
	where $\mathbf{f}\!=\!\left[f^{(x)}, f^{(y)}, f^{(z)}\right]^T$ denotes a spatial frequency vector. 
	Accordingly, the signal in \eqref{imageSignal} can be expanded by rewriting the exponential term in terms of $\mathbf{f}$ as 
	\begin{align}
	y(\mathbf{p}_m, f)  &= \int_{\mathbb{R}^3}  {\mathsf{I}_{\{\mathbf{x} \in \mathcal{X}\}}\left\{\iint_{f^{(z)} = \sqrt{{{f}^2} - \left(f^{(x)}\right)^2 - \left(f^{(y)}\right)^2}}\exp\left(-j\frac{2\pi}{c} \mathbf{f}^T (\mathbf{x}-\mathbf{p}_m) \right) d{{f}^{(x)}}\!d{{f}^{(y)}} \right\}d{\mathbf{x}}}\nonumber\\
	&=\!\!\iint_{f^{(z)} = \sqrt{{{f}^2} - \left(f^{(x)}\right)^2 - \left(f^{(y)}\right)^2}} 
	\left\{\int_{\mathbb{R}^3}  \mathsf{I}_{\{\mathbf{x} \in \mathcal{X}\}} \exp\left(-j\frac{2\pi}{c} \mathbf{f}^T (\mathbf{x}-\mathbf{p}_m) \right)d{\mathbf{x}}\right\} 
	d{{f}^{(x)}}\!d{{f}^{(y)}}\nonumber\\[0.1mm]
	&=\!\!\iint_{\!f^{(z)} = \sqrt{\!\!{{f}^2} - \left(\!f^{(x)}\!\right)^2 - \left(\!f^{(y)}\!\right)^2}}\! 
	\underbrace {  \left\{\!\int_{\mathbb{R}^3}\!\! \mathsf{I}_{\{\mathbf{x} \in \mathcal{X}\}}\! \exp\!\left(\!\!-j\frac{2\pi}{c} \mathbf{f}^T \!{\mathbf{x}}
		\! \right)\!d{\mathbf{x}}\!\right\}}_{\mathsf{FT}_{\mathsf{3D}} (\mathsf{I}_{\{\mathbf{x} \in \mathcal{X}\}})} 
	\!\exp\!\left(\!j\frac{2\pi}{c}\mathbf{f}^T\!\mathbf{p}_m\!\right)\!d{{f}^{(x)}}\!d{{f}^{(y)}},\nonumber
	\end{align}
	Then consider the received signal at the 2D plane $z=0$ where $\mathbf{p}_m = (p_m^{(x)},p_m^{(y)},0)^T$, given as
	\begin{align}\label{finalExpression}
	{y}\left(p_m^{(x)},p_m^{(y)},0,f\right)\!&=\!\!\iint\bigg\{\mathsf{FT}_{\mathsf{3D}}  (\mathsf{I}_{\{\mathbf{x} \in \mathcal{X}\}}\!)\big|_{f^{(z)} = \sqrt{\!\!{{f}^2} - \left(\!f^{(x)}\!\right)^2 - \left(\!f^{(y)}\!\right)^2}}\bigg\} \exp\!\left(\!j\frac{2\pi}{c}\mathbf{f}^T\!\mathbf{p}_m\!\right)\!d{{f}^{(x)}}\!d{{f}^{(y)}}\nonumber\\
	&=\mathsf{FT}_{\mathsf{2D}}^{-1}\bigg\{\mathsf{FT}_{\mathsf{3D}} (\mathsf{I}_{\{\mathbf{x} \in \mathcal{X}\}})\big|_{f^{(z)} = \sqrt{\!\!{{f}^2} - \left(\!f^{(x)}\!\right)^2 - \left(\!f^{(y)}\!\right)^2}} \bigg\}. 
	\end{align}
	This finishes the proof.
	
	\subsection{Proof of Lemma \ref{CommonPointRelation} } \label{C}
	In Fig.~\ref{location}, we have $\theta_{\ell_1}  \in \left[ { - {\pi },{\pi }} \right]$ and the equations of line $l_{{\bf x}_{\mathsf{a}}^{(\ell_1)}{\bf x}_{\mathsf{a}}}$ and $l_{{\bf x}_{\mathsf{a}}^{(\ell_2)}{\bf x}_{\mathsf{a}}}$ are
	\begin{align}
	l_{{\bf x}_{\mathsf{a}}^{(\ell_1)}{\bf x}_{\mathsf{a}}}:\quad{\rm{  }}z = {z_{\mathsf{a}}^{(\ell_1)}} + \tan ({\theta_{\ell_1}})\left( {x - {x_{\mathsf{a}}^{(\ell_1)}}} \right);
	\qquad l_{{\bf x}_{\mathsf{a}}^{(\ell_2)}{\bf x}_{\mathsf{a}}}:\quad{\rm{  }}z = {z_{\mathsf{a}}^{(\ell_2)}} + \tan ({\theta_{\ell_2}})\left( {x - {x_{\mathsf{a}}^{(\ell_2)}}} \right).\nonumber
	\end{align}
	Thus the intersection point can be solved as 
	\begin{align}\label{ApprepresentXA}
	\left\{ \begin{array}{l}
	{x_{\mathsf{a}}} = \frac{{\left( {z_{\mathsf{a}}^{({\ell _1})} - z_{\mathsf{a}}^{({\ell _2})}} \right) + \left( {x_{\mathsf{a}}^{({\ell _2})}\tan ({\theta _{{\ell _2}}}) - x_{\mathsf{a}}^{({\ell _1})}\tan ({\theta _{{\ell _1}}})} \right)}}{{\tan ({\theta _{{\ell _2}}}) - \tan ({\theta _{{\ell _1}}})}}\\
	{z_{\mathsf{a}}} = z_{\mathsf{a}}^{({\ell _1})} + \tan ({\theta _{{\ell _1}}})\big( {{x_{\mathsf{a}}} - x_{\mathsf{a}}^{({\ell _1})}} \big)
	\end{array} \right..
	\end{align}
	The general relations in \eqref{representXA} can be derived similarly.
	Due to the symmetric relation between the virtual and actual TVs,  we have
	\begin{align}\label{AppAngRelation}
	\pi-\varphi = {\varphi_{\ell_1}} - 2{\theta_{\ell_1}} = {\varphi_{\ell_2}} - 2{\theta_{\ell_2}},
	\end{align}
	
	\subsection{Proof of Proposition \ref{Prop2} } \label{D}
	Bring \eqref{AppAngRelation} into (\ref{ApprepresentXA}), $x_{\mathsf{a}}$ can be simplified as
	\begin{align}\label{Applocation_X1_12}
	x_{\mathsf{a}} = \frac{{\big( {{z_{\mathsf{a}}^{(\ell_1)}} - {z_{\mathsf{a}}^{(\ell_2)}}} \big) + {x_{\mathsf{a}}^{(\ell_2)}}\tan \big( {{\theta_{\ell_1}} - \frac{{{\varphi_{\ell_1}} - {\varphi_{\ell_2}}}}{2}} \big) - {x_{\mathsf{a}}^{(\ell_1)}}\tan ({\theta_{\ell_1}})}}{{\tan \big( {{\theta_{\ell_1}} - \frac{{{\varphi_{\ell_1}} - {\varphi_{\ell_2}}}}{2}} \big) - \tan ({\theta_{\ell_1}})}},
	\end{align}
	which indicates that $x_{\mathsf{a}}$ is only determined by the angle $\theta_{\ell_1}  \in \left[ { - {\pi },{\pi }} \right]$. Given the angle $\theta_{\ell_1} $, the estimation of $x_{\mathsf{a}}$ from (\ref{Applocation_X1_12}) is denoted as ${\hat x_{\mathsf{a}}}$.	
	Similarly, another estimation of $x_1$ can be obtained from the common points $({\bf x}_{\mathsf{a}}^{(\ell_1)},{\bf x}_{\mathsf{b}}^{(\ell_1)})$ and $({{\bf x}_{\mathsf{a}}^{(\ell_3)},{\bf x}_{\mathsf{b}}^{(\ell_3)}})$, which is given as
	\begin{align}\label{Applocation_X1_13}
	x_{\mathsf{a}} = \frac{{\big( {{z_{\mathsf{a}}^{(\ell_1)}} - {z_{\mathsf{a}}^{(\ell_3)}}} \big) + {x_{\mathsf{a}}^{(\ell_3)}}\tan \big( {{\theta_{\ell_1}} - \frac{{{\varphi_{\ell_1}} - {\varphi_{\ell_3}}}}{2}} \big) - {x_{\mathsf{a}}^{(\ell_1)}}\tan ({\theta_{\ell_1}})}}{{\tan \big( {{\theta_{\ell_1}} - \frac{{{\varphi_{\ell_1}} - {\varphi_{\ell_3}}}}{2}} \big) - \tan ({\theta_{\ell_1}})}}.
	\end{align}
	The estimation of $x_{\mathsf{a}}$ from (\ref{Applocation_X1_13}) is denoted as ${\tilde x_{\mathsf{a}}}$, and it can be observed from (\ref{Applocation_X1_13}) that ${\tilde x_{\mathsf{a}}}$ is also only determined by $\theta_{\ell_1}$. 
	Therefore, the SV can search $\theta_{\ell_1}$ in the range of $\left[ { - {\pi },{\pi }} \right]$ to minimize $\left| {\hat x_{\mathsf{a}}}-{\tilde x_{\mathsf{a}}}\right|$. Two solutions can be obtained with the optimal $\theta_{\ell_1}$, denoted as ${{\hat x}_{\mathsf{a}}^*}$ and ${{\tilde x}_{\mathsf{a}}^*}$. Then the optimal solution of $x_{\mathsf{a}}$ is given as $x_{\mathsf{a}}^* = \frac{{{\hat x}_{\mathsf{a}}^*}+{{\tilde x}_{\mathsf{a}}^*}}{2}$.  With $x_{\mathsf{a}}^*$,  the location ${\bf x}_{\mathsf{a}}$ can be calculated according to \eqref{representXA}, so as ${\bf x}_{\mathsf{b}}$. Therefore, three virtual TVs is enough for the SV to detect the position of the actual TV. This finishes the proof.

	\subsection{Proof of Proposition \ref{Prop3} } \label{E}
     Based on the line function \eqref{reflectionPlane}, as well as the symmetric geometry relation between virtual TV $\ell$ and the actual one, it is easy to establish the mathematical relation between 
     $\mathbf{x}^{(\ell)}$ and $\mathbf{x}$ as
     \begin{align}\label{locRule}
	    \left\{ \begin{array}{l}
		x = {x^{(\ell )}} + \Delta {x^{(\ell )}}\\
		z = {z^{(\ell )}} + \tan ({\theta _\ell^* }) \cdot \Delta {x^{(\ell )}}
	    \end{array} \right.,
     \end{align}
     where ${\Delta {x^{(\ell )}}}$ is the $x$-direction projection of distance between $(x,z)$ and $(x^{(\ell)},z^{(\ell)})$. The middle point of $(x,z)$ and $(x^{(\ell)},z^{(\ell)})$ locates at the surface $\ell$, meaning that the middle point $\big( {{x^{(\ell )}} + \frac{{\Delta {x^{(\ell )}}}}{2},{z^{(\ell )}} + \tan ({\theta _\ell^* }) \cdot \frac{{\Delta {x^{(\ell )}}}}{2}} \big)$ should satisfy function \eqref{reflectionPlane}. Thus we have
     \begin{align}\label{Deltax}
     	z_{\mathsf{a}}^{(\ell )} + {z_{\mathsf{a}}} - \frac{1}{{\tan ({\theta _\ell^* })}}\left( 2{{x^{(\ell )}} + \Delta {x^{(\ell )}} - x_{\mathsf{a}}^{(\ell )} - {x_{\mathsf{a}}}} \right) = 2{z^{(\ell )}} + \tan ({\theta _\ell^* }) \cdot \Delta {x^{(\ell )}}.
     \end{align}
     The result derived from \eqref{Deltax} is given as
     \begin{align}\label{DeriveDeltax}
     	{\Delta {x^{(\ell )}}}=\frac{{\tan (\theta _\ell ^*)}}{1+{\tan^2 (\theta _\ell ^*)} }\bigg( {\frac{{x_{\mathsf{a}}^{(\ell )} + x_{\mathsf{a}}^*}}{{\tan (\theta _\ell ^*)}} + z_{\mathsf{a}}^{(\ell )} + z_{\mathsf{a}}^* - \frac{{2{x^{(\ell )}}}}{{\tan (\theta _\ell ^*)}} - 2{z^{(\ell )}}} \bigg).
     \end{align}
     Bring \eqref{DeriveDeltax} into \eqref{locRule}, the result \eqref{FindLocation} can be obtained by replacing $(x,z)$ with  $(x^*,z^*)$.

	\bibliographystyle{ieeetran}%
	\bibliography{mirror}

\end{document}